\documentclass[journal]{IEEEtran}

\IEEEoverridecommandlockouts

\usepackage{xcolor,soul,framed} %,caption
\usepackage{bm}
\usepackage[font=small,labelfont=bf]{caption}
\colorlet{shadecolor}{yellow}
\usepackage[pdftex]{graphicx}
\graphicspath{{../pdf/}{../jpeg/}}
\DeclareGraphicsExtensions{.pdf,.jpeg,.png}
\usepackage{import}
\usepackage{amsmath}
\usepackage{amssymb}
\usepackage[linktocpage=true]{hyperref}
\usepackage{hyperref}
\usepackage{multicol,lipsum}
\usepackage{array}
\usepackage{xcolor}
\usepackage{color}
\usepackage{makecell}
\usepackage{tikz}
\usepackage{blkarray}
\usepackage{amsmath,amsfonts,amssymb}
\usepackage{mathtools}
\usepackage{cuted}
\usepackage{cite}
\usepackage{cleveref}
\usepackage{amssymb}
\usepackage{cases}
\usepackage{empheq}
\usepackage{titlesec}

\newcommand{\tb}{\textcolor{black}}
\newcommand{\mt}{\mathtt}

\newcommand{\mb}{\mathbf}

\newcommand{\N}{\mathsf{N}}

\newcommand{\x}{\bm{x}}
\newcommand{\1}{\mathtt{A}}
\newcommand{\2}{\mathtt{B}}

\newcommand{\p}{\mathsf{p}}

\newcommand{\y}{\bm{y}}

\newcommand{\V}{\mathcal{V}}

\newcommand{\X}{\mathcal{X}}

\def\mathbi#1{\textbf{\em #1}}
\newtheorem{theorem}{Theorem}
\newtheorem{lemma}{Lemma}
\newtheorem{assumption}{Assumption}
\newtheorem{remark}{Remark}
\newtheorem{cor}{Corollary}
\newtheorem{definition}{Definition}
\newtheorem{proposition}{Proposition}
\newtheorem{example}{Example}
\newtheorem{revisit}{Revisiting Example}

\newtheorem{appendixLemma}{Lemma}

\newtheorem{propAppendix}{Proposition}

\begin{document}
\bstctlcite{IEEEexample:BSTcontrol}
       \title{From Discrete to Continuous Highest-earning Imitation Dynamics
       %:\\ Discrete Fluctuations  Almost Surely Vanish with Population Size
       }

   \author{
       Azadeh Aghaeeyan 
       and~Pouria~Ramazi% <-this % stops a space
  \thanks{A. Aghaeeyan is with the Department of Mathematics and Statistics, Brock University, Canada (e-mail: a.aghaeeyan@gmail.com). P. Ramazi is with the Departments of Mathematics and Biological Sciences, the University of Calgary, Canada (email: pouria.ramazi@ucalgary.ca)
  This paper was presented in part at the  63rd IEEE Conference on Decision and Control, Milan, Italy, 2024 \cite{aghaeeyan2024discreteCDC}.
 Copyright may be transferred without notice
}%
 }

% The paper headers
%\markboth{IEEE TRANSACTIONS ON 
%}{A \MakeLowercase{\textit{et al.}}: On convergence of continuous dynamics}

% ==========================================================
\maketitle

% === ABSTRACT ====================================================================
% Lessons learned from continuous-time  domains: No fluctuations persist in discrete time/Discrete fluctuations do not persist or scale as fast as population size
% =================================================================================
\begin{abstract}
Imitating the highest earners is a common decision-making heuristic, but in finite populations it can generate persistent fluctuations between strategies. This paper studies whether such fluctuations persist as population size grows in heterogeneous two-strategy populations. We show that the Markov chains describing the discrete imitation dynamics form generalized stochastic approximation processes for a good upper semicontinuous differential inclusion, which defines the associated mean dynamics. We prove that these mean dynamics always converge to equilibria. Using stochastic approximation results, we then show that the amplitudes of fluctuations in the population proportions of the two strategies vanish almost surely as population size tends to infinity. Thus, in well-mixed large populations, highest-earning imitation is unlikely to produce large-scale perpetual fluctuations. 
\end{abstract}

% === KEYWORDS ====================================================================
% =================================================================================
\begin{IEEEkeywords}
decision-making dynamics, imitation, evolutionary game theory, differential inclusion.
\end{IEEEkeywords}

% For peer review papers, you can put extra information on the cover
% page as needed:
% \ifCLASSOPTIONpeerreview
% \begin{center} \bfseries EDICS Category: 3-BBND \end{center}
% \fi
%
% For peerreview papers, this IEEEtran command inserts a page break and
% creates the second title. It will be ignored for other modes.
\IEEEpeerreviewmaketitle

% ====================================================================

% === I. INTRODUCTION =============================================================
% =================================================================================
\section{Introduction}
Individuals regularly face decisions, such as whether to receive a flu shot or purchase a new product.
In these contexts, it is often assumed or observed that individuals chiefly behave as either \emph{best-responders} or \emph{imitators} \cite{tanny1988innovators, van2007new,bauch2005imitation}.
Best-responders choose the option that maximizes their immediate benefit, while imitators pay attention to others' decisions and how satisfied they are with those decisions \cite{10453658,10172285}.

In the long term, a finite population of best responders who benefit from majority agreement equilibrates \cite{ramazi2020convergence}, whereas one benefiting from majority disagreement either equilibrates or fluctuates between two adjacent states \cite{ramazi2017asynchronous,grabisch2020anti}.
 A mixed population of these two types may either equilibrate or fluctuate \cite{roohi}.
The same is true for a population imitating the highest earners
 \cite{FU2024111354}.
 Obtaining the conditions for the existence of fluctuations and characterizing them appear to be challenging.

The \emph{mean dynamics} associated with these population dynamics can be considered an approximation that simplifies analysis \cite{sandholm2010population}.
The mean dynamics of best-response rules are typically differential inclusions \cite{golman2010basins,bestresponsePotential,berger2007two,theodorakopoulos2012selfish}, while noisy best responses yield smooth logit dynamics \cite{cianfanelli2025stability}.
As for the imitation update rules, Lipschitz continuous switching rates between available decisions result in  Lipschitz continuous  mean dynamics including \emph{replicator dynamics} 
 \cite{comoImitation, replicator, replicator2,nogales2020replicator, 10842053}.
 The connection between the behavior of the discrete population dynamics and that of their associated mean dynamics
\cite{benaim2005stochastic,benaim1998recursive, benaim2003deterministic, benaim2009mean, roth2013stochastic} was extensively studied.

Recently, we linked finite-population binary best-response dynamics to their mean dynamics and showed that the amplitudes of perpetual fluctuations in strategy proportions almost surely vanish with population size \cite{aghaeeyan2023discrete}. \tb{Whether the same conclusion holds for highest-earning imitation remains open and is especially relevant because imitators were shown to have a lower convergence tendency than best responders \cite{ramazi2022lower}.
Highest-earning imitation is a heterogeneous-population version of the ``imitate-the-best'' rule in evolutionary game theory \cite{nowakMay1992}. 
It also models success- or payoff-biased social learning, where agents rely on others' observed success as complete payoff information may be unavailable. 
Such learning has been reported in humans \cite{henrich2001evolution} and animals \cite{BarrettEtAl2017}.}

Our contribution is threefold. First, we derive the mean dynamics associated with the discrete highest-earning imitation population dynamics and show that they are a good upper semicontinuous differential inclusion rather than an ordinary differential equation. 
Second, we prove that the family of Markov chains describing the discrete dynamics, indexed by population size, is a GSAP for these mean dynamics--\textbf{\Cref{lem_GSAP}}. 
Using the results of Roth and Sandholm \cite{roth2013stochastic}, we show that the mean dynamics approximate the evolution of the discrete imitation population dynamics when the population size approaches infinity, 
both on finite horizon--\textbf{\Cref{prop_finitie_horizon}}--and in the long term--\textbf{\Cref{thm:2}} and \textbf{\Cref{cor_fluctuationsDoNotScaleWithN_discretePopulationDynamics_2}}.
Third, using the convex-hull structure at boundary and discontinuity points, we characterize the equilibria and prove that all trajectories of the mean dynamics converge to them--\textbf{\Cref{prop:birkhoffcenter}}. Consequently, the fluctuation amplitudes reported in \cite{FU2024111354} vanish almost surely at the population-proportion scale as population size tends to infinity.
The results suggest that in a well-mixed population where individuals imitate the highest earners, fluctuations in the population proportions of the two strategies are appreciable for smaller population sizes.

\tb{
Compared with the best-response dynamics in \cite{aghaeeyan2023discrete}, the highest-earning imitation rule studied here leads to a more involved large-population limit. Although we use the same stochastic-approximation framework of \cite{roth2013stochastic}, the population proportion of $\1$-players no longer fully specifies the population state, and the reduction to an abstract scalar dynamics must be justified separately, especially for trajectories starting on the boundary. 
We therefore formulate the mean dynamics as a differential inclusion whose values on the boundary and on the set where the maximum $\1$-utility equals the maximum $\2$-utility are given by the convex hull of two vector fields.
This model-specific construction is not supplied by \cite{roth2013stochastic} or by \cite{FU2024111354}, which analyzes the finite-population imitation dynamics. 
Our analysis extends \cite{FU2024111354} to the large-population regime and shows that the finite-population fluctuations reported there vanish at the population-proportion scale. 
Also, whereas the preliminary version \cite{aghaeeyan2024discreteCDC} treated only diagonal coordination and antidiagonal anticoordination payoff matrices under an additional simplifying assumption, the present paper generalizes the framework by allowing utilities to be polynomial functions of the proportion of $\1$-players.
}

%\subsection*{Notations}
\textbf{Notations.}
Boldface letters denote vectors.
Sets are denoted by calligraphic fonts $\mathcal{X}$.
% Consider a sequence of scalars $\langle k \rangle = k_0, k_1, \ldots$.
By $\langle x_{k}\rangle_{k = 0}^{\infty}$, we  mean a sequence of variables $x_{0}, x_{1}, x_{2},\ldots$.
The floor function is denoted by $\lfloor x \rfloor$.
% The notation $\mb{e}_i$ denotes the $i^{\text{th}}$ standard basis vector.
The notation $\vert x \vert$ refers to the norm-1 of vector  $x$.
The $i^\text{th}$ standard basis vector in $\mathbb{R}^n$ is denoted by $\mb{e}_i$.
A set-valued map ${\bm {\mathcal{V}}}(\x)$ from $\mathbb{R}^n$ to $\mathbb{R}^n$ is denoted by notation ${\bm {\mathcal{V}}}: \mathbb{R}^n\rightrightarrows \mathbb{R}^n$.
% The notation $[a,b]-c$ implies $[a-c,b-c]$.
 % The set of all subsets of the set $\X$ is shown by the notation $2^{\X}$.
 The interior (resp. boundary) of a set $\X$ is denoted by $\mathrm{int}(\X)$ (resp. $\partial \X$).
The set $\{ 1,2,\ldots,k\}$ for a positive integer $k$ is represented by $[k]=\{1,2,\ldots,k\}$.
% The support of a random variable ${X}$ is denoted by $\mathcal{R}_{{X}}$.
The \tb{indicator} function $\mt{1}(\cdot)$ equals one for a positive argument and negative infinity otherwise.
The notation $\bm 1$ (resp. $\bm 0$)
refers to a vector with all elements equal to 1 (resp. 0) with an appropriate dimension.
The set of $m$-dimensional vectors whose components are positive rational numbers with a denominator dividing $n$, are denoted by   $\frac{1}{n}\mathbb{Z}^m$.
The smallest closed convex set containing  two vectors $\x_1,\x_2 \in \mathbb{R}^n$ is denoted by
$\mathrm{Conv}(\x_1, \x_2)$.
\section{Problem Formulation} \label{sec:problemFormulation}
Consider a well-mixed population of $\N$ agents choosing, back and forth, between two strategies $\1$ or $\2$ over time $t$ indexed by $k \in \mathbb{Z}_{\geq 0}$.
Each agent plays against the entire population, including herself, and, accordingly, obtains  \textit{utility}.
\tb{The utility of each agent is a polynomial function of} the population proportion of $\1-$players, i.e., the ratio of the number of $\1$-players to the population size $\N$.
Agents sharing the same utility function
form  a \emph{type}, and there are altogether $\p$ types labeled by $1,2,\ldots, \p$.
\tb{At time index $k$, when}
the population proportion of $\1$-players is $x^\N(k)$,
the utility of a type-$i$ agent, $i \in [\p]$, 
\tb{from playing $\1$} is denoted by $u^\1_i(x^\N(k))$, where
\begin{equation*} \label{eq:UAi}
       \tb{ u^\1_i(x) = \displaystyle \sum_{l = 0}^\texttt{d} a_{i,l} x^l}.
\end{equation*}
\tb{Here, $\texttt{d}$ is a positive integer, and $a_{i,l}$, $l \in [\texttt{d}]$, is a real number.
Similarly, the utility of agents of type $i$, $i \in [\p]$, 
\tb{from playing $\2$} is denoted by $ u^\2_i(x^\N(k))$ where}
\begin{equation*} \label{eq:UBi}
       \tb{ u^\2_i(x) = \displaystyle \sum_{l = 0}^\texttt{d} b_{i,l} x^l},
\end{equation*}
\tb{
and  $b_{i,l}$, $l \in [\texttt{d}]$, is a real number.
The utility functions considered here are quite general, where the matrix-game model
is a special case obtained by setting the coefficients $a_{i,l}$ and $b_{i,l}$ equal to zero for $l >1$ \cite{aghaeeyan2024discreteCDC}.}
Henceforth, the utility of $\1$-players (resp. $\2$-players) of type $i$, $i \in [\p]$, is referred to as \emph{type-$i$'s $\1$-utility} (resp. \emph{type-$i$'s $\2$-utility}).

The distribution of the population proportions over the total $\p$ types is shown by
$\bm{\rho} = ({\rho}_1, \ldots, {\rho}_{\p})^\top$ where  ${\rho}_p$ denotes the number of agents in type $p$ divided by the population size $\N$, i.e., the  population proportion of type $p$.
Define the \emph{population state} at time index $k$ as the distribution of $\1$-players over the $\p$ types, i.e.,
$
    \x^{\mathsf{N}} (k)
    = ({x}^{\mathsf{N}}_1(k),
    \ldots, 
    {x}^{\mathsf{N}}_{{\p}}(k)
    )^\top,
$
where ${x}^{\N}_p(k)$, $p \in [\p],$ equals the proportion of $\1$-players of type $p$.
The state space then equals 
$\bm{\mathcal{X}}_{s} \cap \frac{1}{\mathsf{N}}\mathbb{Z}^{{\p}}$
where 
$\bm{\mathcal{X}}_{s} = \prod_{j=1}^{{\p}}[0,{\rho}_j].$
%The population proportion of cooperators at index $k$, $x^\N(k)$, is indeed $\sum_{i=1}^\p x^\N_i(k).$

The activation sequence is \emph{asynchronous}\tb{,} that is, at each time index $k$, exactly one agent revisits her strategy according to the \textit{(highest-earning) imitation} update rule, which involves switching to the strategy played by the current highest-earning players. 
If both strategies are played by the current highest-earning players, the active agent chooses $\1$ \tb{(See \Cref{rem_tie} for some other tie-breaking rules)}.
More specifically, the \emph{preferred strategy} of the population at population state $\x^\N$ is defined by
\begin{equation} \label{eq:S}
\begin{aligned}
  \mathtt{s}(\x^\N) =  
\begin{cases}
 \1,  & \text{ if } u^\1(\x^\N) \geq   u^\2(\x^\N),\\
 \2,   & \text{ if }   u^\2(\x^\N) >  u^\1(\x^\N), \\
\end{cases}
\end{aligned}
\end{equation}
where
\begin{align} 
    u^\1(\x^\N) &= \max_{i\in[\p]} u^\1_i(x^\N)\mt{1}(x^\N_i), \label{eq_max_C_utility}\\
    u^\2(\x^\N) &= \max_{i\in[\p]} u^\2_i(x^\N)\mt{1}(\rho_i - x^\N_i). \label{eq_max_D_utility}
\end{align}
The indicator function $\mt{1}(\cdot)$ captures whether there are agents of a particular type playing a specific strategy.

The asynchronous activation sequence is denoted by $\langle A_k\rangle_{k = 0}^{\infty}$, where $A_k$ is the active agent at time index $k$, \tb{independently drawn}, and follows the uniform random  distribution, i.e.,  $\mathbb{P}(A_k = i) = \frac{1}{\N}$ if $i \in [\N]$ and zero otherwise.

The evolution of the population state over time defines the population dynamics or \textbf{\emph{discrete imitation population dynamics}}.
The population state, imitation dynamics, and the activation sequence fully describe the dynamics.

It was shown that a finite population of agents imitating the highest earners
may undergo perpetual fluctuations in the long term \cite{FU2024111354}, where the proportion of $\1$-players never converges to a fixed value.
It is, however, unclear whether the amplitude of the fluctuations grows with population size.
\begin{example} \label{exampleFinite} 
    Consider a population of $\N$ agents stratified into \tb{six} types with the distribution of population proportions \tb{$\bm \rho = (0.2, 0.05, 0.25, 0.125, 0.125, 0.25)^\top$} and \tb{utilities}
%     \tb{
%     \begin{equation*}
%   \scalebox{0.87}{$  \pi_1 \!=\!
% \begin{pmatrix}
%    1.12 & 1.87\\
%    -0.48 & 1.9 \\
% \end{pmatrix}, 
%  \pi_2\! =\!
% \begin{pmatrix}
%    2.43 & -0.33\\
%    0.82 & 0.47 \\
% \end{pmatrix}, 
%  \pi_3\! =\!
% \begin{pmatrix}
%    0.17 & -0.9\\
%    -0.67 & 0.62 \\
% \end{pmatrix},
% $}
% \end{equation*} }
% \tb{
%  \begin{equation*}
%   \scalebox{0.87}{$
% \pi_4 \!=\!
% \begin{pmatrix}
%    1.39& 1.16\\
%    1.98 & 0.98 \\
% \end{pmatrix}, 
%  \pi_5\! =\!
% \begin{pmatrix}
%    1.75 & 0.17\\
%    0.66 & 0.32 \\
% \end{pmatrix}, 
%  \pi_6\! =\!
% \begin{pmatrix}
%    2.09 & 0.77\\
%    -1.03 &  1.27\\
% \end{pmatrix}.
% $}\end{equation*}}%
\tb{
\[
\begin{array}{ll}
u^{\1}_1=-0.75x^\N+1.87, & u^{\2}_1=-2.38x^\N+1.90, \\
u^{\1}_2=2.76x^\N-0.33,  & u^{\2}_2=0.35x^\N+0.47, \\
u^{\1}_3=1.07x^\N-0.90,  & u^{\2}_3=-1.29x^\N+0.62, \\
u^{\1}_4=0.23x^\N+1.16,  & u^{\2}_4=1.00x^\N+0.98, \\
u^{\1}_5=1.58x^\N+0.17,  & u^{\2}_5=0.34x^\N+0.32, \\
u^{\1}_6=1.32x^\N+0.77,  & u^{\2}_6=-2.30x^\N+1.27.
\end{array}
\]
}
    For each value of \tb{$\N=40, 80, 160,$ and $320,$} we simulated the population dynamics \tb{$100$} times starting from the initial condition \tb{$\x^\N_0 = (4/40, 1/40, 2/40, 1/40, 3/40, 3/40)^\top$} with different random activation sequences. 
    We ran the simulations for \tb{$50\times\N$} steps and recorded
    the minimum and maximum of the population proportions of $\1$-players during the last \tb{$5\N$} steps.
  %  For each population size $\N$, we reported these values in
    These values  approach each other as the population size increases (Figure \ref{fig:finitepop1}).
\end{example}
\tb{\Cref{example2Finite} presents a similar simulation study for a population consisting of $10$ types.}
\begin{example} \label{example2Finite}
   \tb{There is a population of $\N$ agents stratified into ten types with the distribution of population proportions \tb{$\bm \rho = (17, 10, 14, 14,  4, 10, 17,  9, 20, 16)^\top/131$} and utilities}
%     \tb{
% \begin{equation*}
%   \scalebox{0.87}{$
% \pi_1 \!=\!
% \begin{pmatrix}
%   0.73  & -0.35\\
%   -0.56 & 1.65
% \end{pmatrix}, 
% \pi_2\! =\!
% \begin{pmatrix}
%   1.58 & 1.89\\
%   0.20 & 1.06
% \end{pmatrix}, 
% \pi_3\! =\!
% \begin{pmatrix}
%   0.02 & 1.05\\
%   1.62 & 1.27
% \end{pmatrix}.
% $}
% \end{equation*}
% }
% \tb{
% \begin{equation*}
%   \scalebox{0.87}{$
% \pi_4 \!=\!
% \begin{pmatrix}
%   0.21 & -1.42\\
%   -1.33 & 0.10
% \end{pmatrix}, 
% \pi_5\! =\!
% \begin{pmatrix}
%   1.82 & 0.14\\
%   1.93 & 1.14
% \end{pmatrix}, 
% \pi_6\! =\!
% \begin{pmatrix}
%   -0.09 & 0.28\\
%   -1.15 & 1.93
% \end{pmatrix}.
% $}
% \end{equation*}}
% \tb{
% \begin{equation*}
%   \scalebox{0.87}{$
% \pi_7 \!=\!
% \begin{pmatrix}
%   0.05 & -1.25\\
%   1.88 & 0.70
% \end{pmatrix}, 
% \pi_8\! =\!
% \begin{pmatrix}
%   -0.37 & 1.99\\
%   1.08 & 2.40
% \end{pmatrix}, 
% \pi_9\! =\!
% \begin{pmatrix}
%   0.75 & -0.78\\
%   1.82 & 1.22
% \end{pmatrix}.
% $}
% \end{equation*}}
% \tb{
% \begin{equation*}
%   \scalebox{0.87}{$
% \pi_{10} \!=\!
% \begin{pmatrix}
%   2.05 & -0.96\\
%   0.12 & 1.93
% \end{pmatrix}.
% $}
% \end{equation*}}
\tb{
\[
\begin{array}{ll}
u^{\1}_1=1.08x^\N-0.35,      & u^{\2}_1=-2.21x^\N+1.65, \\
u^{\1}_2=-0.31x^\N+1.89,     & u^{\2}_2=-0.86x^\N+1.06, \\
u^{\1}_3=-1.03x^\N+1.05,     & u^{\2}_3=0.35x^\N+1.27, \\
u^{\1}_4=1.63x^\N-1.42,      & u^{\2}_4=-1.43x^\N+0.10, \\
u^{\1}_5=1.68x^\N+0.14,      & u^{\2}_5=0.79x^\N+1.14, \\
u^{\1}_6=-0.37x^\N+0.28,     & u^{\2}_6=-3.08x^\N+1.93, \\
u^{\1}_7=1.30x^\N-1.25,      & u^{\2}_7=1.18x^\N+0.70, \\
u^{\1}_8=-2.36x^\N+1.99,     & u^{\2}_8=-1.32x^\N+2.40, \\
u^{\1}_9=1.53x^\N-0.78,      & u^{\2}_9=0.60x^\N+1.22, \\
u^{\1}_{10}=3.01x^\N-0.96,   & u^{\2}_{10}=-1.81x^\N+1.93.
\end{array}
\]
}
   \tb{For each value of $\smash{\N=131, 262, 524,}$ and $1048$ we simulated  the population starting from the initial condition $\x^\N_0 = (16, 6, 4, 8, 3, 3, 9, 3, 15, 1)^\top/131$ with different random activation sequences and  the same set-up as in \Cref{exampleFinite}.
  %  For each population size $\N$, we reported these values in
    The bottom panel of \Cref{fig:finitepop1} reports the minimum and maximum  population proportions of $\1$-players over the last $5\N$ steps.}
\end{example}

According to these two examples, the amplitudes of the fluctuations in the \emph{population proportion of $\1$-players} reduce for larger population sizes.
But is this observation valid only for this specific example, or does it hold for more general cases?
\begin{figure}
    \centering
    \includegraphics[width = 1\linewidth]{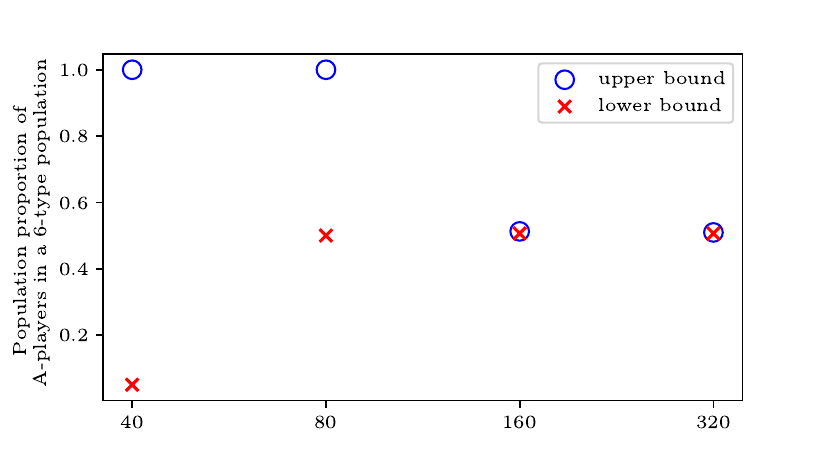}
        \includegraphics[width = 1\linewidth]{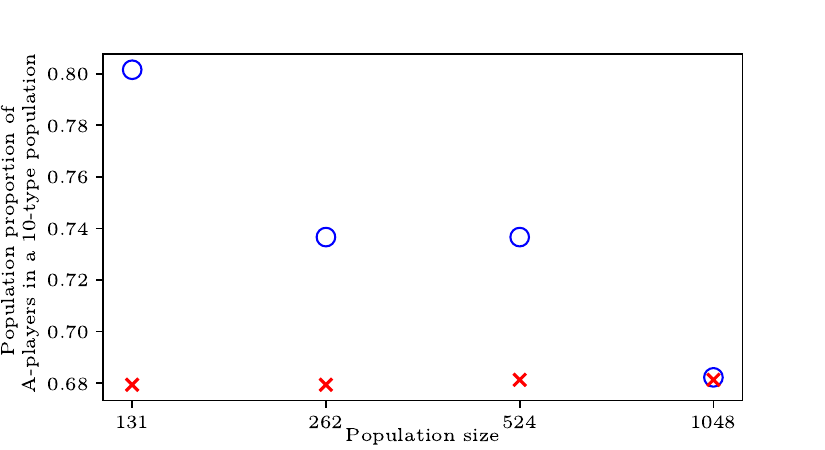}
    \caption{\textbf{Fluctuations in the population proportion of $\1$-players in the long-term for varying population sizes.} 
  \tb{  The top panel shows simulation results for a population consisting of 6 types (\Cref{exampleFinite}); the bottom panel shows simulation results for a population consisting of 10 types (\Cref{example2Finite}).}
     Each circle (resp. cross) denotes the maximum (resp. minimum) of the recorded population proportions of $\1$-players for each population size.
}
    \label{fig:finitepop1}
\end{figure}
Roth and Sandholm lay the foundations to answer this question \cite{roth2013stochastic}.
To leverage their result, we should first show that the discrete imitation population dynamics admit a Markov chain, and the families of these Markov chains, indexed by the population size, are GSAPs for a good upper semicontinuous differential inclusion--the associated mean dynamics.
Then, the question can be answered by investigating the asymptotic behavior of the mean dynamics.
We begin with some preliminaries.

\section{Background}
A set-valued map
$\bm{\V}\!:\!\bm{\mathcal{X}}\!\rightrightarrows\!\mathbb{R}^n$
defines a \emph{differential inclusion} on $\!\bm{\mathcal{X}}\!$ by $\!\dot{\x}\!\in\!\bm{\V}(\x)$.
An \emph{equilibrium} of the differential inclusion is a  point $\!\x^*\! \in\!\bm{\mathcal{X}}$ that satisfies $\!{\mb{0}}\!\in\!\bm{\mathcal{V}}(\x^*)\!$ \cite{cortes2008discontinuous}.
Assume that $\bm{\mathcal{X}}$ is a closed convex subset of $\mathbb{R}^n$ and for each $\x \in \bm{\mathcal{X}}$ we have
$\bm{\V}(\x) \subseteq \mathrm{T}_{\bm{\mathcal{X}}}(\x)$, where $\mathrm{T}_{\bm{\mathcal{X}}}(\x)$ is the \emph{tangent cone} of the set $\bm{\mathcal{X}}$ at $\x$ i.e., $\text{cl}\big(\{\bm z \in \mathbb{R}^n : \bm z = \alpha (\y -\x) \text{ for some }\y\in \bm{\mathcal{X}}  \text{ and } \text{some } \alpha \geq 0 \}\big)$.
Then the differential inclusion $\dot{\x}\in \bm{\V}(\x)$ is
\emph{good upper semicontinuous} if for any $\x \in \bm{\X}$ it is  nonempty, convex,  bounded, and
 its \emph{graph}, i.e., $\{(\x,\y) \mid \y \in \bm{\mathcal{V}}(\x) \}$, is closed \cite[6.A]{sandholm2010population}.
The \emph{basin of attraction} of the set $\!\bm{\mathcal{M}}\!\subseteq\!\bm{\mathcal{X}}\!$  under differential inclusion 
 $\bm{\V}$
with domain $\bm{\mathcal{X}}$ is defined as 
the union of all sets $\bm{\mathcal{U}} \subseteq \bm{\mathcal{X}}$ from which $\bm{\mathcal{M}}$ is attractive \cite{MAYHEW20111045}, i.e.,
  for each solution $\x(t)$ with $\!\x(0)\!\in\!\bm{\mathcal{U}}$ and each open $\epsilon$-neighborhood  of $\bm{\mathcal{M}}$, there exists some time $T>0$ such that for all $t \geq T$ the solution $\x(t)$ is in an $\epsilon$-neighborhood of $\bm{\mathcal{M}}$  \cite{MAYHEW20111045}.
      For a differential inclusion $\dot{\x} \in \bm{\mathcal{V}}(\x)$, defined over the compact and convex state space $\bm{\mathcal{X}}$, let $\bm{\mathcal{T}}_{\x_0}$ be the set of solutions starting from initial condition $\x_0$. 
      The set $\bm{\mathcal{T}}_{\x_0}$ is a subset of the space of continuous maps from $\mathbb{R}_+$ to $\bm{\mathcal{X}}.$
    We define the set-valued dynamical system induced by 
    $\dot{\x} \in \bm{\mathcal{V}}(\x)$ by 
     $\bm{\Phi}:\mathbb{R}_+ \times \bm{\mathcal{X}} \rightrightarrows \mathbb{R}^n$ where
    $\bm{\Phi}_t(\x_0) = \{\bm{x}(t) \in \bm{\mathcal{T}}_{\x_0} \}$.
     The set of recurrent points of $\bm{\Phi}$ are defined as
        $ \bm{\mathcal{R}}_{\bm{\Phi}} 
        =  \{\x_0 \lvert 
        \x_0 \in \bm{\mathcal{L}}(\x_0)  \},
    $
    where $\bm{\mathcal{L}}(\x_0)$ is the \emph{limit set} of point  $\x_0$ defined by $\bigcup_{\y \in \bm{\mathcal{T}}_{\x_0}}  \bigcap_{t \geq 0} \text{cl}(\y[t, \infty])$.
    The \emph{Birkhoff center} of $\bm{\Phi}$ is defined as the closure of $\bm{\mathcal{R}}_{\bm{\Phi}}$.
    
\begin{definition}[Adapted from \cite{roth2013stochastic}] \label{defGSAP} 
 Let $\dot{\x} \in \bm{\mathcal{V}}(\x)$ be  a good upper semicontinuous differential inclusion  satisfying 
    $\bm{\mathcal{V}}(\x) \subseteq \mathrm{T}_{\bm{\mathcal{X}}}(\x)$ for any $\x \in \bm{\mathcal{X}}$, where
     $\bm{\mathcal{X}}$  is a convex and compact state space.
     Consider a sequence of positive values $\epsilon$ approaching zero.
  Let $\mathbi{U}^{\epsilon} = \langle \mathbi{U}^{\epsilon}_k\rangle_{k = 0}^{\infty} $  be 
    a sequence of $\mathbb{R}^n$-valued random variables and $\langle\bm{\mathcal{V}}^{\epsilon}\rangle$ be 
   a family of set-valued maps on $\mathbb{R}^n$. 
     The family $ \langle \mb{X}^{\epsilon}_k \rangle_{k = 0}^{\infty}$ is called a family of \emph{generalized stochastic approximation processes} (or \emph{GSAPs}) for $\dot{\x} \in \bm{\mathcal{V}}(\x)$ if the following are met:
    \begin{enumerate}
        \item $\mb{X}^{\epsilon}_k \in {\bm{\mathcal{X}}}$ for all $k \geq 0$, 
        \item $\mb{X}^{\epsilon}_{k+1} - \mb{X}^{\epsilon}_k - \epsilon \mathbi{U}^{\epsilon}_{k+1} \in \epsilon \bm{\mathcal{V}}^{\epsilon}(\mb{X}^{\epsilon}_k)$,
        \item
        $\forall \delta >0 \exists \epsilon_0>0,$ 
         $\forall \epsilon \leq \epsilon_0 \forall \x \in \bm{\mathcal{X}}$
        \begin{equation} \label{eq_3cnd_GSAP}
        \bm{\mathcal{V}}^{\epsilon}\!(\x)\!\subset\!\{\!\bm{z} \!\in\! \mathbb{R}^n\!\mid\!\exists \y,\!\vert\x-\y\vert< \delta, \!\inf_{\bm{v}\in \bm{\mathcal{V}}(\y)}\! \vert \!\bm{z} - \bm{v}\vert \!< \!\delta \!\},
        \end{equation}
        \item  For all $ T>0$  and all $\alpha>0$, we have
        $
        \lim_{\epsilon \rightarrow 0} \mathbb{P} 
        \left[ \max_{k \leq \frac{T}{\epsilon}} \left \vert 
        \textstyle\sum_{i = 1}^{k} \epsilon \mathbi{U}^{\epsilon}_{i} \right \vert > \alpha \mid \mb{X}^{\epsilon}_0 = \x \right] = 0
        $
        uniformly in $\x \in \bm{\mathcal{X}}$.
    \end{enumerate}
    \end{definition}

For a discrete-time process
$\langle  \mb{X}^{\frac{1}{\N}}_k\rangle_{k = 0}^{\infty},$
 we define an associated
\emph{interpolated process} $\bar{\bm{X}}^\frac{1}{\N}$ running in continuous time as follows
$
\bar{\bm{X}}^\frac{1}{\N}(t) =  \mb{X}^{\frac{1}{\N}}_{l(t)} + \big(\N t - l(t) \big) \big(\mb{X}^{\frac{1}{\N}}_{l(t)+1} - \mb{X}^{\frac{1}{\N}}_{l(t)}\big),
$
where $l(t) = \lfloor t\N \rfloor$.
\begin{theorem} [Adapted from \cite{roth2013stochastic} ]\label{thm:shortTermSandholm}
    Let  $ \langle \mb{X}^{\epsilon}_k \rangle_{k = 0}^{\infty}$  be a family of GSAPs for a differential inclusion 
    $\dot{\x} \in \bm{\mathcal{V}}(\x)$, defined in \Cref{defGSAP}, 
    and let $\bm{\mathcal{T}}_{\Phi}$ be the set of all solutions to the differential inclusion.
    Then for any $T>0$ and any $\alpha >0$, we have
    $$
    \lim_{\frac{1}{\N} \to 0} \mathbb{P}\! \left[ \!\inf_{
    \displaystyle\x(t) \in \bm{\mathcal{T}}_{\Phi}} \sup_{0 \leq s \leq T} \left \vert \!\bar{\bm{X}}^{\frac{1}{\N}}(s) - \x(s) \right \vert \! \geq \alpha \!\mid \!\bar{\bm{X}}^{\frac{1}{\N}}(0) \!= \!\x_0 \right]\! = \!0
    $$
    uniformly in $\x_0 \in {\bm{\mathcal{X}}}$.
\end{theorem}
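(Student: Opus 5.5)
The plan is to split the statement into a purely deterministic "perturbed solutions are close to true solutions" lemma and a probabilistic step that uses condition~4 of \Cref{defGSAP}. Write $\epsilon=\frac{1}{\N}$. By condition~2, pick $\bm v_k\in\bm{\mathcal V}^{\epsilon}(\mb X^{\epsilon}_k)$ with
\[
\mb X^{\epsilon}_{k+1}-\mb X^{\epsilon}_k=\epsilon\,\bm v_k+\epsilon\,\mathbi U^{\epsilon}_{k+1}.
\]
Define the noise path $\bm E^{\epsilon}(t)$ as the piecewise-linear interpolation of the partial sums $\sum_{i\le k}\epsilon\mathbi U^{\epsilon}_i$. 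Then $\bm Y^{\epsilon}(t)=\bar{\bm X}^{\epsilon}(t)-\bm E^{\epsilon}(t)$ is piecewise linear, and on $[k\epsilon,(k+1)\epsilon)$ it has slope $\bm v_k$.

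Since $\bm{\mathcal V}$ is bounded (good upper semicontinuity, compact $\bm{\mathcal X}$), condition~3 gives a uniform bound $M$ on $\vert\bm v_k\vert$ for small $\epsilon$. Condition~3 also says that $\bm v_k$ lies within $\delta$ of $\bm{\mathcal V}(\y)$ for some $\y$ with $\vert\y-\mb X^{\epsilon}_k\vert<\delta$. For $t$ in the $k$th interval, $\vert\mb X^{\epsilon}_k-\bar{\bm X}^{\epsilon}(t)\vert\le \epsilon M+2\sup_s\vert\bm E^{\epsilon}(s)\vert$. Hence on the event $\Omega_\beta=\{\max_{k\le T/\epsilon}\vert\sum_{i\le k}\epsilon\mathbi U^{\epsilon}_i\vert\le\beta\}$, the path $\bar{\bm X}^{\epsilon}$ on $[0,T]$ is a \emph{$(\delta',\beta)$-perturbed solution}, with $\delta'=\delta+\epsilon M+2\beta$. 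This means
\[
\bar{\bm X}^{\epsilon}(t)=\bar{\bm X}^{\epsilon}(0)+\int_0^t\bm w(s)\,ds+\bm e(t),\qquad \vert\bm e\vert\le\beta,
\]
where $\bm w(s)$ lies within $\delta'$ of $\bm{\mathcal V}(\y)$ for some $\y$ within $\delta'$ of $\bar{\bm X}^{\epsilon}(s)$.

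The deterministic lemma states: for every $\alpha>0$ there exist $\delta_0,\beta_0>0$ such that every $(\delta_0,\beta_0)$-perturbed path in $\bm{\mathcal X}$ on $[0,T]$ is within sup-distance $\alpha$ of some solution in $\bm{\mathcal T}_{\Phi}$ with the same initial point. Note that $\bar{\bm X}^{\epsilon}(0)=\x_0$, and we require closeness to a solution from $\x_0$. I would prove the lemma by contradiction and compactness.
\begin{enumerate}
\item Suppose it fails. Then there are perturbed paths $\bm z_n$ with $\delta_n,\beta_n\to0$, each at distance $\ge\alpha$ from every solution starting at $\bm z_n(0)$.
\item The paths $\bm z_n-\bm e_n$ are $M'$-Lipschitz and take values in a bounded set. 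By Arzel\`a--Ascoli, a subsequence converges uniformly to a Lipschitz $\bm z$, and $\bm z_n\to\bm z$ as well since $\bm e_n\to0$.
\item The limit lies in $\bm{\mathcal X}$ because $\bm{\mathcal X}$ is closed.
\item The derivatives $\bm w_n$ converge weakly in $L^2$ along a further subsequence. Closedness of the graph, convexity of the values of $\bm{\mathcal V}$, and Mazur's lemma give $\dot{\bm z}(t)\in\bm{\mathcal V}(\bm z(t))$ almost everywhere. This is the standard convergence theorem for upper semicontinuous convex-valued maps.
\item So $\bm z\in\bm{\mathcal T}_{\Phi}$ with $\bm z(0)=\lim\bm z_n(0)$.
\end{enumerate}
Step~5 contradicts step~1 only once a solution starting exactly at $\bm z_n(0)$ is known to be near $\bm z$. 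I would obtain this by shifting the initial points: build the solution from $\bm z_n(0)$ by rerunning the compactness argument with initial points fixed. Equivalently, I would state the lemma with the infimum over all solutions, as in the theorem, where the initial condition enters only through $\bar{\bm X}^{\epsilon}(0)=\x_0$ and the infimum is over $\bm{\mathcal T}_\Phi$.

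The probabilistic step then follows. Given $\alpha$, take $\delta_0,\beta_0$ from the lemma. Choose $\delta,\beta$ small and $\epsilon_0$ from condition~3 so that $\delta+\epsilon M+2\beta\le\delta_0$ and $\beta\le\beta_0$. On $\Omega_\beta$ the event in the theorem cannot occur, so the probability in question is at most $\mathbb P[\Omega_\beta^c\mid\mb X^{\epsilon}_0=\x_0]$. By condition~4 this tends to $0$ uniformly in $\x_0$.

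I expect the main obstacle to be the deterministic lemma, specifically passing to the limit in the inclusion. The perturbations are evaluated at nearby points $\y$ rather than at $\bm z_n(s)$, so closing the graph needs the $\delta$-enlargement to shrink uniformly. Uniformity in $\x_0$ is secured because the lemma's constants depend only on $\alpha$, $T$, $M$ and $\bm{\mathcal V}$, not on the initial point.
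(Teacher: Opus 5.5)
The paper gives no proof of this statement; it imports it from Roth and Sandholm. Your route is the standard argument behind that result, which rests on the perturbed-solution compactness argument of Bena\"{\i}m, Hofbauer and Sorin: condition~4 controls the martingale part, condition~3 makes the interpolated process a perturbed solution on the good event, and Arzel\`a--Ascoli plus the convergence theorem (closed graph, convex values, Mazur) show that perturbed solutions with vanishing perturbations are uniformly close to solutions. Your steps~1--5 and the probabilistic step are sound, but only for the version of the deterministic lemma in which the approximating solution may start anywhere.

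\textbf{The same-initial-point lemma is false, and the two versions are not equivalent.} It fails already for the paper's mean dynamics \eqref{eq:semicontinuous}:
\begin{itemize}
\item Take a non-attracting value $q^*_{k,k+1}\in(0,1)$ and $\x_0=(q^*_{k,k+1}+\eta)\bm\rho$ with small $\eta>0$.
\item Since $\bm 0\in\bm{\mathcal V}(q^*_{k,k+1}\bm\rho)$ and $\vert\x_0-q^*_{k,k+1}\bm\rho\vert=\eta$, the constant path at $\x_0$ is a $(\delta,0)$-perturbed solution for every $\delta>\eta$.
\item Along every solution from $\x_0$, however, $\bm 1^\top\x$ obeys $\dot z=1-z$ until it reaches $q^*_{k+1}$. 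So on $[0,T]$ it moves by an amount bounded below independently of $\eta$, and no $\delta_0$ works.
\end{itemize}
The obstruction is that $\x_0\mapsto\{\text{solutions from }\x_0\}$ is only upper, not lower, semicontinuous. Hence ``rerunning the compactness argument with initial points fixed'' can only produce a solution from $\lim_n\bm z_n(0)$, never one from $\bm z_n(0)$ itself.

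Drop that version and state the lemma as: every $(\delta_0,\beta_0)$-perturbed path on $[0,T]$ lies within sup-distance $\alpha$ of some element of $\bm{\mathcal T}_\Phi$. Your contradiction argument proves exactly this. Its constants are automatically independent of the initial point, which is what gives uniformity in $\x_0$. It is also precisely what the theorem asserts, since the infimum there ranges over all of $\bm{\mathcal T}_\Phi$.

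Two minor additions are needed. First, concatenate the limit $\bm z$ with a solution from $\bm z(T)$ so that it lies in $\bm{\mathcal T}_\Phi$; such a solution exists because $\bm{\mathcal V}$ is good upper semicontinuous with $\bm{\mathcal V}(\x)\subseteq\mathrm{T}_{\bm{\mathcal X}}(\x)$. Second, invoke condition~4 on horizon $T+1$, because the interpolation on the last partial step uses the index $\lfloor T/\epsilon\rfloor+1$.
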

\tb{An intersection point $x_0$ of two univariate polynomials $f(x_0)$ and $g(x_0)$ is \emph{transverse} if $\dot{f}(x_0) - \dot{g}(x_0) \neq 0$. }
\section{Markov chain and mean dynamics}

To obtain the Markov chain associated with the discrete imitation population dynamics, we first write the dynamics in a compact form.
Define $s(\x^\N)$ as a function 
that returns $1$ (resp. $2$) if $\1$ (resp. $\2$) is
the preferred strategy of the population at state $\x^\N$, or, equivalently,
\begin{align} \label{eq:u}
{s}(\x^{\N})  =
    \begin{cases}
   1, & \text{if }  u^\1(\x^\N) \geq   u^\2(\x^\N),    \\
   2, & \text{if }  u^\1(\x^\N) <   u^\2(\x^\N).
    \end{cases}
\end{align}
\begin{proposition} \label{def:discreteDynamics}
   The 
   discrete imitation population dynamics 
  correspond to the dynamics described by
   following discrete time stochastic equation for $k\in\mathbb{Z}_{\geq0}$:
   \begin{align}\label{eq:discretePopulationDynamics}
        \x^{\mathsf{N}}(k+1)  &= \x^{\mathsf{N}}(k) + \frac{1}{\mathsf{N}} \big({S}_k-s(\x^{\N}(k))\big)\mb{e}_{{P}_k},
    \end{align}
   where ${P}_k$ and ${S}_k$ are random variables with
    distributions $\smash{\mathbb{P}[{P}_k  = p] = \rho_p, \mathbb{P} [{S}_k =1 \vert P_k = p] = x^{\N}_p/\rho_p}$, 
    and 
    $\mathbb{P} [{S}_k \! =\! 2 \vert P_k = p] = 1 - x^{\N}_p/\rho_p,$ for $p \in [\p]$,
    and
     supports
     $[\p]$ and $\{1, 2\}$, 
    respectively.
\end{proposition}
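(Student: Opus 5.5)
The plan is to show that the one-step transition law induced by the update rule coincides with the one-step law of \eqref{eq:discretePopulationDynamics}. Both processes are Markov on $\bm{\mathcal{X}}_{s} \cap \frac{1}{\N}\mathbb{Z}^{\p}$, and the initial condition is the same, so matching transition probabilities establishes the correspondence.

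First, I will describe the original transitions. At time $k$ the active agent $A_k$ is uniform on $[\N]$. Because the population is well mixed, only her type and her current strategy matter. Let $P_k$ denote the type of $A_k$. Since type $p$ contains $\N\rho_p$ agents, we get $\mathbb{P}[P_k=p]=\rho_p$.

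Conditioned on $P_k=p$, the agent is uniform among the $\N\rho_p$ type-$p$ agents. Exactly $\N x^\N_p(k)$ of these currently play $\1$. So if $S_k\in\{1,2\}$ encodes her current strategy ($1$ for $\1$, $2$ for $\2$), then $\mathbb{P}[S_k=1\mid P_k=p]=x^\N_p/\rho_p$ and $\mathbb{P}[S_k=2\mid P_k=p]=1-x^\N_p/\rho_p$. These are exactly the stated distributions, with supports $[\p]$ and $\{1,2\}$.

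Next, the active agent adopts the preferred strategy $\mathtt{s}(\x^\N(k))$ from \eqref{eq:S}. By \eqref{eq:u}, it is encoded by $s(\x^\N(k))\in\{1,2\}$. Only the coordinate $P_k$ of the state can change, since the other agents keep their strategies. I will check the four cases of the pair $(S_k, s(\x^\N(k)))$:
\begin{itemize}
\item If $S_k=s(\x^\N(k))$, the agent keeps her strategy, so the increment is $0=\frac{1}{\N}(S_k-s)$.
\item If $S_k=1$ and $s=2$, an $\1$-player of type $P_k$ switches to $\2$. Then $x_{P_k}$ decreases by $1/\N$; but $\frac{1}{\N}(S_k-s)=-\frac{1}{\N}$ is the correct change.
\item If $S_k=2$ and $s=1$, the increment is $+\frac{1}{\N}$, which is correct.
\end{itemize}
In every case the update equals $\frac{1}{\N}(S_k-s(\x^\N(k)))\mb{e}_{P_k}$. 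This is consistent once strategy $\1$ is encoded by $1$ and $\2$ by $2$ in both $S_k$ and $s$: going from $2$ to $1$ adds an $\1$-player.

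Finally, I will note that the state stays in $\bm{\mathcal{X}}_s$. A decrement occurs only when $S_k=1$, which has positive probability only if $x^\N_{P_k}>0$; an increment occurs only when $S_k=2$, which requires $x^\N_{P_k}<\rho_{P_k}$. Hence \eqref{eq:discretePopulationDynamics} is a Markov chain with the same kernel as the imitation dynamics. No step is a real obstacle; the only care needed is the sign and encoding convention in the case check, and the fact that $s(\cdot)$ depends only on the current state (via \eqref{eq_max_C_utility}–\eqref{eq_max_D_utility}), which gives the Markov property.
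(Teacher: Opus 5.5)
Your proposal is correct and follows essentially the same argument as the paper: you take $P_k$ to be the type of the uniformly drawn active agent (probability $\rho_p$) and $S_k$ to be her current strategy (conditional probability $x^{\N}_p/\rho_p$). You then check that switching to the preferred strategy $s(\x^{\N}(k))$ changes coordinate $P_k$ by exactly $\frac{1}{\N}\big(S_k - s(\x^{\N}(k))\big)$. Your additional remarks on the sign convention, the invariance of $\bm{\mathcal{X}}_s$, and the Markov property are sound and make explicit what the paper leaves implicit.
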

\begin{proof}
    According to the imitation population dynamics, at each  index $k$, exactly one agent becomes active with probability $1/\N$.
    Accordingly, the probability that the active agent belongs to type $p$ is $\rho_p$, represented by the random variable $P_k$.
    Given the definition of the population state, the probability that the active agent of type $p$ is playing $\1$ (resp. $\2$) is $x^\N/\rho_p$ (resp. $1-x^\N/\rho_p$), captured by the random variable  $S_k$.
    If the preferred strategy is the same as her current strategy, she keeps it, causing no changes in the population state; otherwise, she switches to the preferred strategy. 
    If the active agent switches from playing $\2$ (resp. $\1$) to playing $\1$ (resp. $\2$), 
     the number of $\1$-players in type $p$ will increase (resp. decrease) by $1$, thus changing the proportion of $\1$-players in type $p$  by $1/\N.$
    This process is  captured by  $\frac{1}{\N}\big(S_k - s(\x^\N)\big)\mb{e}_{P_k} $.
    Hence, this evolution is fully captured by the dynamics \eqref{eq:discretePopulationDynamics}.
\end{proof}

  In \Cref{def:discreteDynamics}, 
  the random variable ${P}_k$ denotes the type of the active agent at index $k$, and
  the random variable ${S}_k$  equals $1$ (resp. $2$) if  $\1$ (resp. $\2$)
  is the strategy of the active agent at time index $k$.
The corresponding Markov chain to the discrete imitation population dynamics \eqref{eq:discretePopulationDynamics} is defined as follows.
% By \emph{realization} of a Markov chain, we mean a specific sequence of the states starting from a specific initial condition and follows the transition probability of the Markov chain.
\begin{definition} \label{prop:markovchain}
 The \textbf{\emph{imitation population dynamics Markov chain}} 
 is defined as the Markov chain
 $\langle \mathbf{X}^{\frac{1}{\mathsf{N}}}_k\rangle_{k = 0}^{\infty}$  with 
 the state space $\bm{\mathcal{X}}_{s} \cap \frac{1}{\mathsf{N}}\mathbb{Z}^{{\p}}$, the initial state $\mathbf{X}^{\frac{1}{\N}}_0 = \x^{\N}(0),$ and the
 transition probabilities
 \small{
    \begin{align}
        & \text{Pr}_{\x^{\N},\y^{\N}} = \label{eq:markov}\\ 
    &\begin{cases} 
        ({\rho}_p - x^{\mathsf{N}}_p)(2-{s}(\x^{\mathsf{N}})), &\hspace{-3pt}\text{if }\exists p(\y^{\N} = \frac{1}{\mathsf{N}}\mb{e}_p + \x^{\N}), \nonumber \\
         x^{\mathsf{N}}_p ({s}(\x^{\mathsf{N}})-1), &\hspace{-3pt}\text{if } \exists p(\y^{\N} = -\frac{1}{\mathsf{N}}\mb{e}_p + \x^{\N}),
    \\
        1- \Big(\displaystyle\sum_{p=1}^{\p} ({\rho}_p - x^{\mathsf{N}}_p)(2-{s}(\x^{\mathsf{N}}))&\hspace{-3pt}\text{if } \y^{\N} =  \x^{\N},\\
        \hspace{15pt}  
        \quad  +  x^{\mathsf{N}}_p ({s}(\x^{\mathsf{N}})-1) \Big),
        & \\
        0, &\hspace{-3pt}\text{otherwise}. 
    \end{cases}   
\end{align}}
\end{definition}

The discrete imitation population dynamics and their Markov chain parallel those in \cite{aghaeeyan2023discrete}, but the switching rule $s(\cdot)$ differs, leading to different dynamics.

We increase the population size in a way that the population distribution, i.e.,  the vector of population proportions  $\bm{\rho}$, remains unchanged.
The elements of the sequence $\langle N\rangle_{N = \N_0}^{\infty}$ of the population size  should satisfy $N  \bm{\rho} \in \mathbb{Z}^{\p}$.   
    We henceforth assume that the population sizes satisfy this condition. 
   
The next step is to show that  $\langle \mb{X}^{\frac{1}
{\mathsf{N}}}_k\rangle_{k = 0}^{\infty} $, indexed by the population size $\N$,
is a GSAP for a good upper semicontinuous differential inclusion.
We claim that $\langle \mb{X}^{\frac{1}
{\mathsf{N}}}_k\rangle_{k = 0}^{\infty} $ is
a GSAP for the following differential inclusion.

\begin{definition}  \label{def_semicontinuousDynamics}
    The \textbf{\emph{continuous-time imitation population dynamics}} are defined by 
       $ \dot{\x}\in \bm{\mathcal{V}}(\x)$  with set-valued map
    $\bm{\mathcal{V}} :\bm{\mathcal{X}}_{s} \rightrightarrows \mathbb{R}^{\p}$ given by
    \begin{align} 
      \bm {\V}(\x&) =
    \label{eq:semicontinuous}
    \\
       &\begin{cases} 
    \{\bm \rho -\x\},  &  \hspace{-9pt} \text{ if } \x \notin \partial \bm{\mathcal{X}}_{s} \text{ and } u^\1(\x) >   u^\2(\x), \nonumber \\
        \{-\x\},  & \hspace{-9pt} \text{ if }  \x \notin \partial \bm{\mathcal{X}}_{s} \text{ and } u^\2(\x) >   u^\1(\x),\\
          \mathrm{Conv}(\bm \rho - \x, -\x),  &\hspace{-6pt} \text{otherwise\tb{.}}  \\
        \end{cases} 
    \end{align}   
\end{definition}

\begin{lemma}   \label{lem_GSAP}
    The collection of $ \langle \mb{X}^{\frac{1}{\mathsf{N}}}_k\rangle_{k = 0}^{\infty}$ is a GSAP for \eqref{eq:semicontinuous}.
\end{lemma}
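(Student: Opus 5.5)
We verify the requirements of \Cref{defGSAP} with $\epsilon = 1/\N$, state space $\bm{\mathcal{X}} = \bm{\mathcal{X}}_s$, and the canonical decomposition of the increment into its conditional mean and a martingale difference. First we check that \eqref{eq:semicontinuous} is admissible. The set $\bm{\mathcal{X}}_s$ is a box, hence compact and convex. Each value $\bm{\V}(\x)$ is nonempty, convex, and bounded, since it is either a singleton or the segment $\mathrm{Conv}(\bm\rho-\x,-\x)$. Both $\bm\rho-\x$ and $-\x$ lie in $\mathrm{T}_{\bm{\mathcal{X}}_s}(\x)$, because $\x+\alpha(\bm\rho-\x)$ and $\x-\alpha\x$ stay in the box for $\alpha\in[0,1]$; the cone is convex, so the segment lies in it too. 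For closedness of the graph, take $(\x_j,\y_j)\to(\x,\y)$ with $\y_j\in\bm{\V}(\x_j)$. Then $\y_j\in\mathrm{Conv}(\bm\rho-\x_j,-\x_j)$ for every $j$, so $\y\in\mathrm{Conv}(\bm\rho-\x,-\x)$. This settles the case where $\x\in\partial\bm{\mathcal{X}}_s$ or $u^\1(\x)=u^\2(\x)$. Otherwise $\x$ is interior with a strict inequality. The indicators in \eqref{eq_max_C_utility}--\eqref{eq_max_D_utility} equal $1$ near an interior point, so $u^\1,u^\2$ are maxima of polynomials and hence continuous there. The strict inequality therefore persists for $\x_j$ near $\x$, and the singleton value passes to the limit.

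Next, set $\mb{X}^{1/\N}_k=\x^\N(k)$, which gives condition 1. Define the drift $\bm{v}^\N(\x)=\N\,\mathbb{E}[\mb{X}_{k+1}-\mb{X}_k\mid\mb{X}_k=\x]$. Using \eqref{eq:markov}, this equals $\bm\rho-\x$ when $s(\x)=1$ and $-\x$ when $s(\x)=2$. Let $\mathbi{U}_{k+1}=\N(\mb{X}_{k+1}-\mb{X}_k)-\bm{v}^\N(\mb{X}_k)$ and $\bm{\V}^{1/\N}(\x)=\{\bm v^\N(\x)\}$; this gives condition 2.

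For condition 3, fix $\x\in\bm{\mathcal{X}}_s$ and choose $\y=\x$. If $\x$ is in the boundary or on the tie set, then $\bm v^\N(\x)\in\{\bm\rho-\x,-\x\}\subset\bm{\V}(\x)$. Otherwise $s(\x)$ agrees with the strict preference and $\bm v^\N(\x)\in\bm{\V}(\x)$ exactly. Thus $\bm{\V}^{1/\N}(\x)\subset\bm{\V}(\x)$ for every $\x$ and every $\N$, and \eqref{eq_3cnd_GSAP} holds trivially. The only delicate point is that the discrete rule uses the indicator-corrected maxima with the tie broken toward $\1$, and at interior non-tie points it must produce the same field as \eqref{eq:semicontinuous}. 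The boundary and tie cases are absorbed by the convex hull, which is why \eqref{eq:semicontinuous} was built that way.

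Condition 4 is the main estimate and is standard. The $\mathbi{U}_i$ are martingale differences with $|\mathbi{U}_i|\le 2+|\bm\rho|+|\x|\le C$ uniformly in $\N$, since each step has $|\mb X_{k+1}-\mb X_k|\le 1/\N$. Hence $M_k=\sum_{i\le k}\frac1\N\mathbi{U}_i$ is a martingale with $\mathbb{E}|M_k|^2\le kC^2/\N^2$ (working componentwise, or with the Euclidean norm and norm equivalence). By Doob's maximal inequality,
\[
\mathbb{P}\Big[\max_{k\le T\N}|M_k|>\alpha\Big]\le \frac{c\,T\N C^2/\N^2}{\alpha^2}=\frac{cTC^2}{\alpha^2\N}\to0,
\]
uniformly in the initial state. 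Alternatively, one can invoke the Azuma--Hoeffding inequality as in \cite{aghaeeyan2023discrete}. Since all four conditions hold, the lemma follows.
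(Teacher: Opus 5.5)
Your proposal is correct and takes essentially the same route as the paper: the drift computation $\rho_p(2-s(\x))-x_p$, the martingale-difference noise, the choice $\y=\x$ in the third condition because the drift is a selection of $\bm{\V}$, and uniformly bounded martingale differences for the fourth condition. The differences are minor: you verify good upper semicontinuity directly, including the tangent-cone inclusion that the paper leaves implicit, rather than through the appendix lemma, and you prove the fourth condition with Doob's maximal inequality instead of citing Roth and Sandholm's Proposition 2.3.
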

\begin{proof}
   Given \Cref{lem_USC}, the differential inclusion \tb{\eqref{eq:semicontinuous} is upper semicontinuous.
   It is straightforward to show that $\bm{\mathcal{X}}_{s}$ is convex, so the first condition in \Cref{defGSAP} is satisfied.
   Let $\bm{\nu}^{\frac{1}{\N}}(\bm{x}^{\N})$ denote the expected increment per time unit of the Markov chain $\langle \mb{X}^{\frac{1}{\N}}_k\rangle_{k}$ at  $\bm{x}^{\N}$.
Since there are $\N$ transitions per unit time, we have
  $\bm{\nu}^{\frac{1}{\N}}(\bm{x}^{\N}) = \smash{\N \mathbb{E}[\mathbf{X}^{\frac{1}{\N}}_{k+1} - \mathbf{X}^{\frac{1}{\N}}_{k} \mid \mathbf{X}^{\frac{1}{\N}}_k = \x^{\N}]}$.
The $p^{\text{th}}$ element of $\bm{\nu}^{\frac{1}{\N}}$, $p \in [\p]$, which is the expected increment in the type $p$, equals the sum of the multiplication of each possible change with its probability.
In view of \eqref{eq:markov}, there are two possible changes:  an increase  or a decrease of size $\frac{1}{\N}$, yielding ${\nu}^{\frac{1}{\N}}_p({\x}^{\N}) =\N \big( \frac{1}{\N}(\rho_p - x^{\N}_p)(2-s(\x^{\N}))  -\frac{1}{\N} (x^{\N}_p (s(\x^{\N})-1))\big)$ and consequently, 
$
{\nu}^{\frac{1}{\N}}_p({\x}^{\N}) =\rho_p\big(2 - s(\x^{\N})\big)-x^{\N}_p 
$.
The second condition of \Cref{defGSAP} is satisfied by 
$ \mb{U}^{\frac{1}{\N}}_{k+1} =$ $\smash{
  \N(\mb{X}^{\frac{1}{\N}}_{k+1} - \mb{X}^{\frac{1}{\N}}_{k} -
 \mathbb{E}[\mathbf{X}^{\frac{1}{\N}}_{k+1} - \mathbf{X}^{\frac{1}{\N}}_{k} \mid \mathbf{X}^{\frac{1}{\N}}_k = \x^\N])}$
 resulting in
$\mb{X}^{\frac{1}{\N}}_{k+1} - \mb{X}^{\frac{1}{\N}}_{k} - {\frac{1}{\N}}\mb{U}^{\frac{1}{\N}}_{k+1} = {\frac{1}{\N}}\bm{\nu}^{\frac{1}{\N}}(\x)$.
As for the third condition, note that
${\bm \nu}^{\frac{1}{\N}}({\x})$ is a selection of the differential inclusion \eqref{eq:semicontinuous}. Hence, the condition \eqref{eq_3cnd_GSAP} is satisfied by
taking $\y=\x$.
As $ \mathbb{E}[\mb{U}^{\frac{1}{\N}}_{k+1} \vert \mathbf{X}^{\frac{1}{\N}}_k= \x^{\N}] =0$, $\mb{U}^{\frac{1}{\N}}$ is a Martingale difference sequence. In addition, 
 $ \mb{U}^{\frac{1}{\N}}$ is uniformly bounded by $\sqrt{\sum_{i=1}^\p (1 + \rho_i)^2}$. Hence, the last condition is satisfied thanks to \cite[Proposition 2.3]{roth2013stochastic}. This completes the proof.
   }
\end{proof}
\section{Finite horizon deterministic approximation}
Based on the following proposition, which is a straightforward application of \Cref{thm:shortTermSandholm}, the interpolated process of the imitation population dynamics Markov chain, denoted by $\bar{\bm{X}}^\frac{1}{\N}$, closely tracks one of the solutions of the continuous-time population dynamics \eqref{eq:semicontinuous} when population size approaches infinity.
\begin{proposition} \label{prop_finitie_horizon}
    For any $T>0$ and for any $\alpha >0$ we have
    $$
    \lim_{\frac{1}{\N} \to 0} \mathbb{P} \!\left[ \!\inf_{\x(t) \in \bm{\mathcal{T}}_{\Phi}} \!\sup_{0 \leq s \leq T} \vert \bar{\bm{X}}^{\frac{1}{\N}}(s) - \x(s) \vert \geq \alpha\! \mid \bar{\bm{X}}^{\frac{1}{\N}}(0) = \x_0 \right]\! = \!0
    $$
    uniformly in $\x_0 \in {\bm{\mathcal{X}}_{s}}$, where
    $\bar{\bm{X}}^\frac{1}{\N}$ is the interpolated process  of the imitation population dynamics Markov chain and
    $\bm{\mathcal{T}}_{\Phi}$ is the set of all solutions to the continuous-time population dynamics \eqref{eq:semicontinuous}.
\end{proposition}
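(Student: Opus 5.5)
The proposition will follow by a direct application of \Cref{thm:shortTermSandholm}, with $\epsilon = \frac{1}{\N}$, to the family of imitation population dynamics Markov chains $\langle \mb{X}^{\frac{1}{\N}}_k\rangle_{k=0}^{\infty}$. We will first check that the standing hypotheses of \Cref{thm:shortTermSandholm} and \Cref{defGSAP} hold for the differential inclusion \eqref{eq:semicontinuous} on $\bm{\mathcal{X}}_{s}$. The state space $\bm{\mathcal{X}}_{s}=\prod_{j=1}^{\p}[0,\rho_j]$ is a box, hence compact and convex. By \Cref{lem_USC}, $\bm{\mathcal{V}}$ is good upper semicontinuous. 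We also need the tangent-cone condition $\bm{\mathcal{V}}(\x)\subseteq \mathrm{T}_{\bm{\mathcal{X}}_{s}}(\x)$. Both generating vectors point into the box: $(\bm\rho-\x)=1\cdot(\bm\rho-\x)$ with $\bm\rho\in\bm{\mathcal{X}}_{s}$, and $-\x=1\cdot(\bm 0-\x)$ with $\bm 0\in\bm{\mathcal{X}}_{s}$. Since the tangent cone of a convex set is a convex cone, every element of $\mathrm{Conv}(\bm\rho-\x,-\x)$ also lies in it. This covers all three cases of \eqref{eq:semicontinuous}.

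Next, \Cref{lem_GSAP} gives that the family $\langle \mb{X}^{\frac{1}{\N}}_k\rangle$, indexed by admissible $\N$ with $\N\bm\rho\in\mathbb{Z}^{\p}$ and $\epsilon=1/\N\to 0$, is a GSAP for \eqref{eq:semicontinuous}. The interpolated process $\bar{\bm{X}}^{\frac{1}{\N}}$ defined before \Cref{thm:shortTermSandholm} uses exactly the step size $\epsilon=1/\N$, i.e.\ $\N$ transitions per unit time. This matches the time scaling used in \Cref{lem_GSAP}, where $\bm\nu^{\frac{1}{\N}}$ is the expected increment per unit time. Invoking \Cref{thm:shortTermSandholm} then yields the displayed limit, uniformly over initial conditions, with $\bm{\mathcal{T}}_{\Phi}$ the solution set of \eqref{eq:semicontinuous}.

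The only delicate point, and the one I expect to need care, is the uniformity domain. \Cref{thm:shortTermSandholm} states uniformity in $\x_0\in\bm{\mathcal{X}}$, but the chain only starts at lattice points $\bm{\mathcal{X}}_{s}\cap\frac{1}{\N}\mathbb{Z}^{\p}$. I will interpret the conditioning as ranging over attainable initial states $\x_0$ for each $\N$. The Roth--Sandholm bound is uniform over all such states, since condition 4 of \Cref{defGSAP} holds uniformly by the bounded martingale-difference argument in \Cref{lem_GSAP}. Hence the supremum over lattice initial conditions in $\bm{\mathcal{X}}_{s}$ of the probability tends to zero, which is the asserted uniformity in $\x_0\in\bm{\mathcal{X}}_{s}$. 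No further estimates are required.
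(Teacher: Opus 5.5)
Your proposal is correct and follows the paper's proof. The paper's proof is exactly the one-line combination of \Cref{lem_GSAP} with \Cref{thm:shortTermSandholm} for the vanishing sequence $\frac{1}{\N}$, restricted to $\N$ with $\N\bm\rho\in\mathbb{Z}^{\p}$. Your explicit check of the tangent-cone condition $\bm{\mathcal{V}}(\x)\subseteq \mathrm{T}_{\bm{\mathcal{X}}_{s}}(\x)$, and your remark on reading the uniformity as ranging over the lattice initial states, fill in details the paper leaves implicit rather than taking a different route.
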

\begin{proof}
    The collection $\langle \mb{X}^{\frac{1}{\N}} \rangle_{k=0}^\infty$  
     for the vanishing sequence $\langle \frac{1}{\N} \rangle_{N=\N_0}^\infty$ is a GSAP
     for the differential inclusion \eqref{eq:semicontinuous} (\Cref{lem_GSAP}), and given \Cref{thm:shortTermSandholm}, the proof is complete.
\end{proof}
\begin{revisit}
For the population in \Cref{exampleFinite}, when
$\x \notin \partial \bm{\mathcal{X}}_{s}$,  the equation $u^\1 (\x) = u^\2 (\x)$ is satisfied at \tb{$\bm 1^\top\x = 0.018, 0.509, 0.656.$}
Accordingly, the continuous-time population dynamics associated with the discrete imitation population dynamics in \Cref{exampleFinite} equal $\dot{\x} \in \bm {\V}(\x)$
where
\tb{
 \begin{small}
  \begin{align*}
  &\bm {\V}(\x) = \\
      &   \begin{cases} 
        \{- \x\}, &\hspace{-26pt} \text{if }  \x \notin \partial \bm{\mathcal{X}}_{s} \text{ and } \bm 1^\top \x \in (0, 0.018) \cup (0.509, 0.656),    \\
        \{\bm \rho - \x\}, &\hspace{-26pt}  \text{if }  \x \notin \partial \bm{\mathcal{X}}_{s} \text{ and }  \bm 1^\top \x  \in (0.018, 0.509) \cup (0.656, 1), \\
         \mathrm{Conv}(\bm \rho\! - \x,\! -\x), & \text{otherwise.} 
        \end{cases}
\end{align*}
 \end{small}
}

We simulated the above dynamics starting from the initial condition \tb{$(4/40, 1/40, 2/40, 1/40, 3/40, 3/40)^\top$} for one time unit.
The solution trajectory is depicted in the solid black curve in \Cref{fig:finiteHorizon}.
We also simulated one random realization of the imitation population dynamics Markov chain for each of the \tb{four} different population sizes starting from the same initial condition.
%The interpolated realizations are depicted in \Cref{fig:finiteHorizon}.
As can be seen in \Cref{fig:finiteHorizon}, the interpolated realizations and the solution of the continuous-time dynamics are getting closer for larger population sizes.
\end{revisit}
 \begin{figure*}
    \centering  \includegraphics[width=1\linewidth]{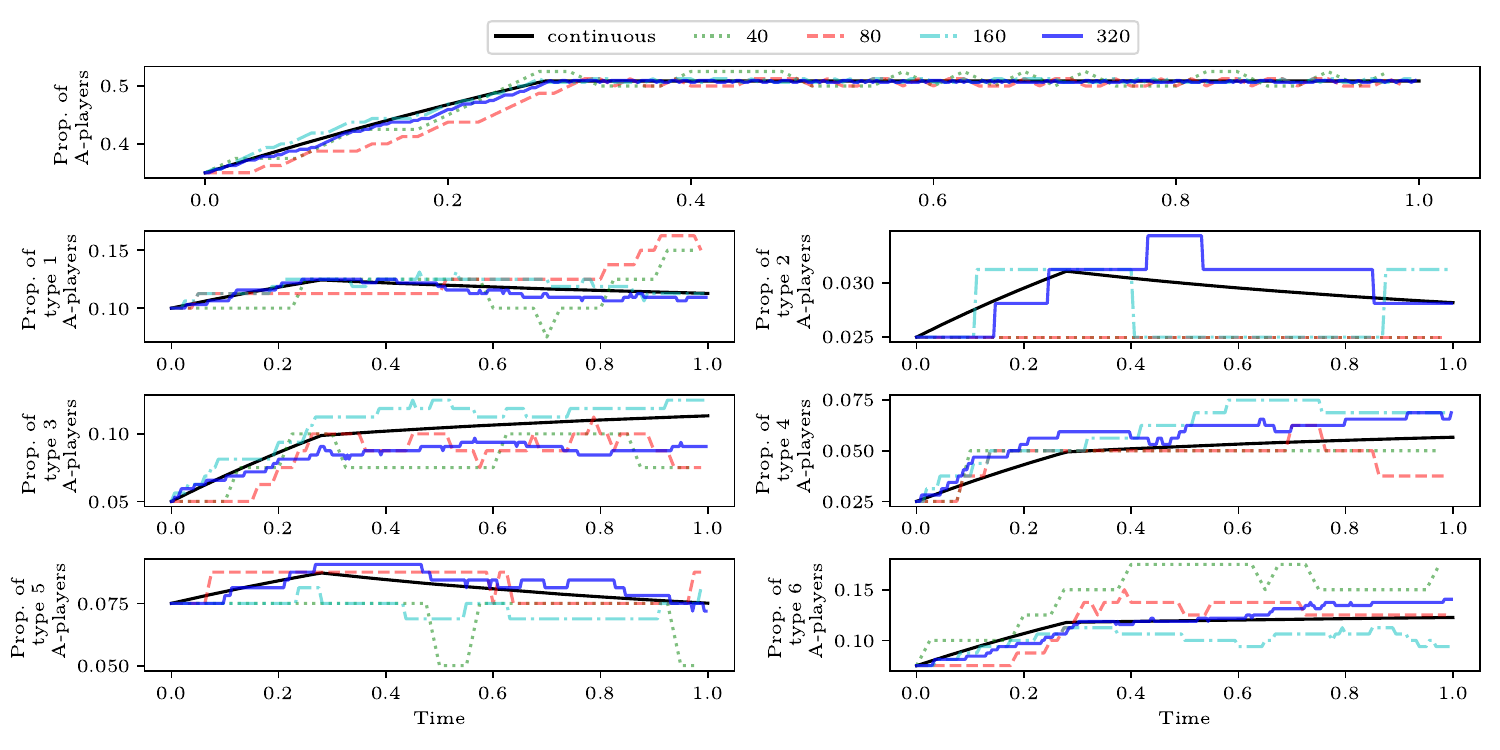}
  \caption{\tb{\textbf{For large population sizes, the trajectories of the proportion of $\1$-players obtained from the continuous-time population dynamics and from the interpolated discrete population dynamics closely match.}
The solid black curve shows the proportion of $\1$-players over time obtained from the continuous-time dynamics, while the other four curves represent the interpolated discrete population dynamics for different population sizes. 
}}
    \label{fig:finiteHorizon}
\end{figure*}

In the next section, we reveal the long-term behavior of the discrete imitation population dynamics, when the population size approaches infinity, by analyzing that of  the continuous-time imitation population dynamics.
\section{Asymptotic deterministic approximation}
% In this section, first we investigate the asymptotic behavior of the semicontinuous population dynamics and then, using the established results in the stochastic approximation theory, we link it to that of the discrete imitation population dynamics as the population size approaches infinity.
% \subsection{Abstract dynamics}
% In view of \eqref{eq:semicontinuous}, the semicontinuous dynamics are fully determined by the value of the proportion of cooperators, $x = \x^{\top}\bm 1$. 
% Hence, the analysis of the  evolution of the proportion of cooperators, $x$, sheds light on that of the semicontinuous population dynamics.
% Following \cite{aghaeeyan2023discrete},
In this section, using the established results in \cite{roth2013stochastic}, we connect the asymptotic behavior of the discrete imitation population dynamics, for population sizes approaching infinity, to that of the continuous-time population dynamics. 
\vspace{-11pt}
\subsection{Abstract Dynamics}
In view of \eqref{eq_max_C_utility} and \eqref{eq_max_D_utility}, if the population state is not at the boundary of the state space, i.e., $\x \notin \partial \bm{\mathcal{X}}_{s},$ the active case in \eqref{eq:semicontinuous} is fully determined by the population proportion of $\1$-players, i.e., $\bm 1^\top \x$.
Accordingly, analyzing the evolution of the population proportion of $\1$-players helps reveal that of the $\p$-dimensional population state.
\begin{definition}
The \emph{abstract dynamics} associated with the continuous-time population dynamics \eqref{eq:semicontinuous} are defined by
    $\dot{x} \in \mathcal{X}(x)$ with $\mathcal{X}:[0,1] \rightrightarrows [-1,1]$ given by
 \begin{equation} \label{eq:abstract}
  \mathcal{X}(x) = 
\begin{cases} 
        \{1-x\}, & \text{if } \displaystyle\max_{i\in[\p]} u^\1_i(x) >\displaystyle\max_{i\in[\p]} u^\2_i(x),  \\
        \{-x\}, & \text{if }  \displaystyle\max_{i\in[\p]} u^\1_i(x) <\displaystyle\max_{i\in[\p]} u^\2_i(x),  \\
         [-x, 1-x], & \text{if } \displaystyle\max_{i\in[\p]} u^\1_i(x) = \displaystyle\max_{i\in[\p]} u^\2_i(x)\tb{,}
        \end{cases}
\end{equation}
with the initial value  $x(0) = \bm 1^\top \x(0)$.
\end{definition}
% \begin{proof}
%     The proof is similar to that of \cite[Proposition 2]{aghaeeyan2023discrete} and is omitted.
% \end{proof}
We refer to $x$ in \eqref{eq:abstract} as the \emph{abstract state}.
We call 
$\max_{i\in[\p]} u^\1_i(x)$ (resp. $\max_{i\in[\p]} u^\2_i(x)$)
the \emph{maximum  $\1$-utility} (resp. \emph{maximum  $\2$-utility}) at the abstract state $x$.
\begin{assumption} \label{ass:2}
    % \begin{equation*}
    %     \{(i,j) \in [\p]\times [\p] \mid i \neq j, b_i = d_j, a_i-b_i+d_j - c_j =0 \} = \emptyset.
    % \end{equation*}
\tb{The utility functions are unique. 
In addition, at each intersection point of the maximum $\1$-utility and maximum $\2$-utility, on the unit interval, the maximum value is attained by a unique $\1$-utility and a unique $\2$-utility, and these two active utility functions intersect transversely.}
\end{assumption}
\Cref{ass:2} makes sure that the highest earning strategy at the intersection points of the maximum $\1$-utility and maximum $\2$-utility change.
We additionally make the following assumption.
\begin{assumption} \label{ass:making-A-chosen-strategy-at-1}
 The maximum  $\1$-utility at the abstract state $1$ (resp. $0$) is greater (resp. less) than the maximum  $\2$-utility.
\end{assumption}

In view of \Cref{ass:making-A-chosen-strategy-at-1},
 at the extreme points, the highest-earning strategy is the one played by the agents.

According to \cite[Proposition 1]{FU2024111354},
the equilibrium points of the discrete population dynamics are  the extreme points, i.e., $\bm 0$ and $\bm \rho$, and the intersections of the maximum $\1-$ and  $\2-$utilities.
The following lemma shows that the same characteristics hold for the equilibria of the abstract dynamics.
\begin{lemma} \label{lem:equilibriumOfAbstract}
    Under Assumptions \ref{ass:2} and \ref{ass:making-A-chosen-strategy-at-1},
    the extreme points, $0$ and $1$, and  the intersections of the maximum $\1-$ and $\2-$utilities are the equilibria of the
     abstract dynamics 
    \eqref{eq:abstract}.
   % , which are at least three and at most $2\p + 1$. 
\end{lemma}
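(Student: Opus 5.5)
The plan is to check the condition $0 \in \mathcal{X}(x)$ from \eqref{eq:abstract} directly, case by case on where $x \in [0,1]$ lies. Define the continuous function $g(x) = \max_{i\in[\p]} u^\1_i(x) - \max_{i\in[\p]} u^\2_i(x)$. It is a difference of maxima of finitely many polynomials, so it is continuous on $[0,1]$. By \eqref{eq:abstract}, $\mathcal{X}(x)$ equals $\{1-x\}$ when $g(x)>0$, equals $\{-x\}$ when $g(x)<0$, and equals $[-x,1-x]$ when $g(x)=0$. So $x$ is an equilibrium exactly when one of the following holds: $g(x)>0$ and $x=1$; $g(x)<0$ and $x=0$; or $g(x)=0$, since for $x\in[0,1]$ we have $0\in[-x,1-x]$.

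The first step handles the extreme points. By \Cref{ass:making-A-chosen-strategy-at-1}, $g(1)>0$, so $\mathcal{X}(1)=\{0\}$ and $1$ is an equilibrium. Likewise $g(0)<0$, so $\mathcal{X}(0)=\{0\}$ and $0$ is an equilibrium.

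The second step handles the intersection points. If $x^*$ is an intersection of the maximum $\1$- and $\2$-utilities, then $g(x^*)=0$, hence $0\in[-x^*,1-x^*]=\mathcal{X}(x^*)$. Note that $x^*\neq 0,1$ by \Cref{ass:making-A-chosen-strategy-at-1}.

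The third step proves the converse: no other point is an equilibrium. For $x\in(0,1)$ with $g(x)\neq 0$, $\mathcal{X}(x)$ is either $\{1-x\}$ or $\{-x\}$, and both are nonzero. So the equilibrium set is exactly $\{0,1\}\cup\{x: g(x)=0\}$.

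The only delicate point is whether "the intersections" form a finite, well-defined set; the equilibrium characterization itself does not need this. I would use \Cref{ass:2} to settle it. Near each zero $x^*$ of $g$, the maxima are attained by unique utilities $u^\1_i$ and $u^\2_j$. By continuity these remain the active utilities on a neighborhood, so locally $g = u^\1_i - u^\2_j$, and transversality gives $g'(x^*)\neq 0$. Hence the zeros are isolated, and by compactness of $[0,1]$ they are finitely many. Moreover, $g$ changes sign at each zero; this sign change is what the subsequent analysis of the dynamics will use. Since $g(0)<0<g(1)$, the intermediate value theorem guarantees at least one intersection, so there are at least three equilibria.
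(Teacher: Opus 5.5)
Your proposal is correct and follows the same route as the paper: check $0\in\mathcal{X}(x)$ case by case, use \Cref{ass:making-A-chosen-strategy-at-1} for the endpoints, note that $0\in[-x,1-x]$ at the intersection points, and apply the intermediate value theorem to get at least one interior intersection. Your extra argument that the zeros of $g$ are isolated, and hence finitely many, via \Cref{ass:2} and compactness justifies finiteness differently from the paper's parenthetical appeal to the polynomial degree, and it is self-contained, but the lemma itself does not need it.
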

\begin{proof}
At an equilibrium point of the abstract dynamics, $x_0$, we have $0 \in \mathcal{X}(x_0)$.
The state $x = 1$ (resp. $x = 0$) is the only candidate equilibrium point satisfying the first (resp. second) case in \eqref{eq:abstract}. 
Given \Cref{ass:making-A-chosen-strategy-at-1}, at $x = 1$ (resp. $x=0)$, we have $\max_{i\in[\p]} u^\1_i(x) >\max_{i\in[\p]} u^\2_i(x)$ (resp. $\max_{i\in[\p]} u^\2_i(x) >\max_{i\in[\p]} u^\1_i(x)$), and hence, the point $x=1$ (resp. $x=0$) is the equilibrium point of the abstract dynamics \eqref{eq:abstract}.
 As $0 \in [-x, 1-x]$ for all $x \in (0,1)$, the last case in \eqref{eq:abstract} indicates that any abstract state satisfying  
 $\max_{i\in[\p]} u^\1_i(x) =\max_{i\in[\p]} u^\2_i(x)$ is an equilibrium point of the abstract dynamics.
 At each interior equilibrium point, the maximum $\1-$ and  $\2-$utilities must cross.
% For a population with $\p$ types, using induction, it can be shown that there are at most $2\p-1$ intersection points and, in turn,
%  at most $2\p +1$ equilibrium points for the abstract dynamics.
As for minimum number of equilibria, $u^\1_i(x)$ and $u^\2_i(x)$, for $i \in [\p]$, are \tb{polynomials, and hence, and } continuous functions of $x$. 
\tb{Therefore, $\Delta u(x)$, defined as $\max_{i \in [\p]} u^\2_i(x) - \max_{i \in [\p]} u^\1_i(x)$, is  continuous in $x$.
In addition, in view of \Cref{ass:2}, we have $\Delta u(0)>0$  and $\Delta u(1)<0$.
With this, and in view of continuity of  $\Delta u(x)$,
at least one point $x$ on the unit interval exists such that $\Delta u(x) = 0$.
This implies that, including the extreme points, there are at least three equilibria.}
\end{proof} 

Let  $\mathtt{q}$ be the number of equilibrium points of the abstract dynamics. \tb{($\mathtt{q}$ is finite as the utilities are polynomials of degree, at most, $\mathtt{d}$.)}
We then rank these equilibria in ascending order, where $q_i$, for $i \in [\mathtt{q}]$, refers to the $i^{\text{th}}$ equilibrium point,
resulting in
 $q_1 = 0$ and $q_{\mathtt{q}}  = 1$.
     \tb{Note that \Cref{ass:2} ensures that, in a neighborhood of each intersection point, 
\emph{(i)} a unique type attains the maximum $\1$-utility, and a unique type attains 
the maximum $\2$-utility; and \emph{(ii)} the function
$
\max_{i\in[\p]} u^\1_i(x) - \max_{i\in[\p]} u^\2_i(x)
$
is locally equal to the difference of two polynomials, is differentiable at the 
intersection point, and has a simple root there.}
 %Inspired by \cite{FU2024111354}, I removed it on June 12, 2025, because I am going to rewrite it in terms of utilities' parameters, and then it is not any more quite similar to FU.
 \tb{An equilibrium point $q$ is \emph{attracting} if it is either extreme or the relation
   $\dot{\bar{u}}^\2(q) > \dot{\bar{u}}^\1(q)$ holds, where  $\bar{u}^\2(q) = \max_{i\in[\p]} u^\2_i(q)$ and $\bar{u}^\1(q) = \max_{i\in[\p]} u^\1_i(q)$.
   Otherwise, the equilibrium point $q$ is
    \emph{non-attracting}.
}
   Assume that there are ``$\mathtt{a}$'' attracting equilibrium points for the abstract dynamics.
   We denote the $k^\text{th}$ smallest attracting equilibrium point by $q^*_k$, resulting $q^*_1 = 0$ and $q^*_{\mathtt{a}} = 1.$
\begin{lemma}
    Consider the abstract dynamics \eqref{eq:abstract}.
    Under \Cref{ass:2}, between any two consecutive attracting equilibrium points $q^*_{k}$ and $q^*_{k+1}$, for $k \in [\mathtt{a}-1]$, there is one non-attracting equilibrium point $q^*_{k,k+1}$.
\end{lemma}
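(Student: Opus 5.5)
Let $\Delta(x)=\bar u^\1(x)-\bar u^\2(x)$, where $\bar u^\1(x)=\max_{i\in[\p]}u^\1_i(x)$ and $\bar u^\2(x)=\max_{i\in[\p]}u^\2_i(x)$. The plan is to use a sign-alternation argument on $\Delta$ over $[0,1]$.

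By \Cref{lem:equilibriumOfAbstract}, the equilibria $q_1=0<q_2<\dots<q_{\mathtt q}=1$ are the two endpoints together with the zeros of $\Delta$ in $(0,1)$. There are finitely many of these, since near every point $\Delta$ coincides locally with a difference of two polynomials, and these polynomials are not identical because the utilities are unique. By \Cref{ass:2}, each interior zero is a simple root of a locally differentiable function. Hence $\Delta$ changes sign at each interior zero: it goes from negative to positive when $\dot{\bar u}^\1>\dot{\bar u}^\2$ (non-attracting), and from positive to negative when $\dot{\bar u}^\2>\dot{\bar u}^\1$ (attracting). Between consecutive zeros, $\Delta$ has constant sign by continuity and the intermediate value theorem. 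By \Cref{ass:making-A-chosen-strategy-at-1}, $\Delta(0)<0$ and $\Delta(1)>0$.

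Next I classify the interior zeros in increasing order. Just to the right of $0$ we have $\Delta<0$. At each interior zero the sign flips, so the crossings alternate between $-\to+$ (non-attracting) and $+\to-$ (attracting). The first interior zero must therefore be non-attracting, and since $\Delta(1)>0$, the last interior zero must also be non-attracting. The ordered equilibria thus read
\[
0,\ \text{N},\ \text{A},\ \text{N},\ \text{A},\ \dots,\ \text{N},\ 1,
\]
where N denotes non-attracting and A denotes attracting. The extremes $0$ and $1$ are attracting by definition. So the attracting points $q^*_1=0,\dots,q^*_{\mathtt a}=1$ are separated by exactly one non-attracting point each, and that point is the desired $q^*_{k,k+1}$. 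This also gives $\mathtt q=2\mathtt a-1$.

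The main obstacle is justifying rigorously that a simple root of $\Delta$ is a genuine sign change, i.e.\ that no tangential zero can occur. The maximum of polynomials is only piecewise polynomial, so $\Delta$ need not be differentiable away from the crossing points. I would handle this by invoking \Cref{ass:2}: in a neighborhood of each zero, the maxima are attained by unique types, so there $\Delta=u^\1_i-u^\2_j$ for fixed $i,j$ with nonzero derivative at the zero. Hence $\Delta$ is strictly monotone near the zero and the sign flips. Isolatedness of the zeros follows from the same local representation together with compactness of $[0,1]$.
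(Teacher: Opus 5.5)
The paper states this lemma without proof, so there is no argument of its own to compare with. Your sign-alternation argument is correct and supplies the missing proof. It uses the local representation the paper notes as a consequence of \Cref{ass:2} (unique active utilities and a simple root at each intersection), together with the endpoint signs used in the proof of \Cref{lem:equilibriumOfAbstract}.

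You are right to invoke \Cref{ass:making-A-chosen-strategy-at-1} even though the lemma cites only \Cref{ass:2}: it is genuinely needed. Suppose instead $\Delta(0)>0$. Then the extreme point $0$, which is attracting by definition, would be followed by a first interior zero that is a $+\to-$ crossing, hence also attracting. These two consecutive attracting points would have no non-attracting point between them.

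One small inaccuracy: the finiteness claim in your first paragraph, that near every point $\Delta$ coincides with a single difference of two polynomials, is false. It fails at points where two $\1$-utilities (or two $\2$-utilities) exchange the maximum, where $\Delta$ is only piecewise polynomial. This does not affect the proof, because the argument in your last paragraph suffices on its own. At each zero, \Cref{ass:2} gives $\Delta=u^\1_i-u^\2_j$ on a neighborhood with nonzero derivative, so every zero is isolated. The zero set is closed in the compact interval $[0,1]$, and a compact set consisting of isolated points is finite.

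With that, the alternation and the count $\mathtt{q}=2\mathtt{a}-1$ follow as you describe. You in fact obtain \emph{exactly} one non-attracting equilibrium between consecutive attracting ones, which is slightly stronger than the statement.
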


\Cref{lem:asym-stablility-abstract} reveals the global convergence analysis of the abstract dynamics.
\begin{lemma} \label{lem:asym-stablility-abstract}
    For the  dynamics \eqref{eq:abstract}.
    Under \Cref{ass:2},
    \begin{enumerate}
    \item 
    the attracting equilibrium point $q^*_k$, $k \in  [\mathtt{a}]$, is asymptotically stable with the basin of attraction 
          \begin{equation}
    \mathcal{A}(q^*_k) = 
        \begin{cases}
            [0, q^*_{1,2}), & \text{if } k=1, \\
           (q^*_{k-1,k},q^*_{k,k+1}), & \text{if } 2 \leq k \leq \mathtt{a}-1, \\
            (q^*_{\mathtt{a}-1,\mathtt{a}},1], & \text{if } k= \mathtt{a}. 
        \end{cases}
    \end{equation} 
    \item 
The limit set of the equilibrium point $q^*_{k,k+1}$, for $k \in [\mathtt{a}-1],$ is $\{ q^*_{k},q^*_{k,k+1}, q^*_{k+1}\}$.
\end{enumerate}
\end{lemma}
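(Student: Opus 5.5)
The proof works directly on the scalar differential inclusion \eqref{eq:abstract}, using the sign structure of $\Delta u(x)=\bar u^\2(x)-\bar u^\1(x)$ between consecutive equilibria.

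\textbf{Sign structure.} By \Cref{ass:2} every interior equilibrium $q_i$ is a simple root of $\Delta u$, so $\Delta u$ changes sign there. By \Cref{ass:making-A-chosen-strategy-at-1}, $\Delta u>0$ on $(0,q_2)$ and $\Delta u<0$ on $(q_{\mathtt{q}-1},1)$. An interior equilibrium with $\dot{\bar u}^\2>\dot{\bar u}^\1$ is one where $\Delta u$ passes from negative to positive. On each open interval $(q_i,q_{i+1})$ the inclusion is single-valued: $\dot x=-x<0$ when $\Delta u>0$, and $\dot x=1-x>0$ when $\Delta u<0$. Since the signs alternate, the attracting and non-attracting equilibria alternate. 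This recovers the preceding lemma and shows that each $q^*_{k,k+1}$ is a point where $\Delta u$ goes from $+$ to $-$.

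\textbf{Basin in part 1.} Take $x(0)\in(q^*_{k-1,k},q^*_k)$. On this interval $\Delta u<0$, so $\dot x=1-x>0$ and $x(t)=1-(1-x(0))e^{-t}$ until the trajectory reaches $q^*_k$, which it does in finite time since $q^*_k<1$. The argument on $(q^*_k,q^*_{k,k+1})$ is symmetric, with $\dot x=-x$ and finite arrival because $q^*_k>0$. The endpoint intervals $[0,q^*_{1,2})$ and $(q^*_{\mathtt{a}-1,\mathtt{a}},1]$ follow the same way; note that at $x=0$ the inclusion is $\{-x\}=\{0\}$, so $0$ stays fixed.

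Once a solution sits at an interior attracting point, it can never leave. For a solution in $[-x,1-x]$ at $q^*_k$, moving right would require entering a region where $\dot x=-x<0$, which is a contradiction, and leaving to the left is excluded symmetrically. Absolute continuity of solutions makes this rigorous: on any time set where $x>q^*_k$ we would have $\dot x<0$.

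This gives uniqueness of solutions after arrival, convergence in finite time, and Lyapunov stability, since a solution started within $\epsilon$ moves monotonically toward $q^*_k$. Together these give asymptotic stability with the stated basin.

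\textbf{Part 2.} From $q^*_{k,k+1}$ the inclusion is $[-x,1-x]$, and both neighbouring vector fields point away. There are therefore three kinds of solutions: one stays at $q^*_{k,k+1}$ forever; one leaves to the right, entering $(q^*_{k,k+1},q^*_{k+1})$ where $\dot x>0$; and one leaves to the left. A solution may also rest at the point for an arbitrary time before departing.

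Each departing solution, by part 1, reaches $q^*_{k+1}$ or $q^*_k$. A solution cannot leave and then return, because flows on both sides point away from $q^*_{k,k+1}$. Taking the union of limit sets over all solutions in $\bm{\mathcal{T}}_{q^*_{k,k+1}}$ gives $\{q^*_k,q^*_{k,k+1},q^*_{k+1}\}$.

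\textbf{Main obstacle.} The hardest step is ruling out sliding or chattering solutions of the set-valued inclusion near equilibria, that is, showing that an absolutely continuous solution cannot linger or oscillate across a repelling point except by sitting exactly on it. I would handle this through the sign of $\dot x$ almost everywhere on the open sets $\{x(t)>q\}$ and $\{x(t)<q\}$, arguing via monotonicity on each connected component of these time sets.
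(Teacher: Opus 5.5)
Your proposal is correct and follows essentially the paper's route: on either side of $q^*_k$ the inclusion reduces to $\dot x=1-x$ or $\dot x=-x$, the explicit solutions give arrival, and at $q^*_{k,k+1}$ a solution either stays forever or departs into an adjacent basin. The differences are minor: you replace the paper's Lyapunov function $V(x)=\tfrac12(x-q^*_k)^2$ and its cited nonsmooth Lyapunov theorem with a direct absolute-continuity monotonicity argument, and you make explicit the sign-alternation and no-return steps that the paper leaves implicit (note only that convergence to the extreme points $0$ and $1$ is exponential, not in finite time, so "follow the same way" should be read as asymptotic convergence there).
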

\begin{proof}
Part 1)
We prove for $k \notin \{1,\mathtt{a}\}.$
The case $k \notin \{1,\mathtt{a}\}$ can be handled similarly.
Consider the function $V(x) = 0.5(x - q^*_k)^2$.
It is straightforward to show that  for $x \in (q^*_{k-1,k}, q^*_k)$ (resp. $x \in (q^*_{k}, q^*_{k,k+1})$), the abstract dynamics read as
     $\dot{x} = 1 -x$ (resp. $\dot{x} =  -x$), refer to the proof of \cite[Lemma A1]{aghaeeyan2023discrete}.
    Consequently, $\dot{V}(x) <0$ for $x \in (q^*_{k-1,k},q^*_{k,k+1})$ where $x \neq q^*_k$ and $\dot{V}(q^*_k) = 0$.
    As a result, the point $q^*_k$ is an asymptotically stable equilibrium point for the abstract dynamics \cite[Theorem 1]{cortes2008discontinuous}.
    As for the basin of attraction, for all $x \in (q^*_{k-1,k},q^*_{k})$ (resp. $x \in (q^*_{k-1,k},q^*_{k})$),
    with regard to $\dot{x} = 1-x$ (resp. $\dot{x} = -x$),
    the evolution of $x(t)$ equals $x(t) = (x_0 -1)\exp(-t) + 1$ (resp. $x(t) = x_0\exp(-t)$), where $x_0$ is the value of the abstract state at $t = 0$.
    Consequently, $x(t)$  falls in an $\epsilon$-neighborhood of $q^*_k$ in  finite time  $t_1 < \ln ((1- x_0)/(1 - q^*_k))$ (resp. $t_1 < \ln (x_0/q^*_k)$).
    In view of $\dot{V}(x) < 0$ for all $x \in (q^*_{k-1,k},q^*_{k,k+1})$ where $x \neq q^*_k$, the solution trajectory remains in the $\epsilon$-neighborhood for all $t \geq t_1.$
    Part 2) The abstract dynamics at $q^*_{k,k+1}$ equals 
    $\dot{x} \in [-x,1-x].$ 
 This implies that the abstract state may either \emph{(i)} remain at $q^*_{k,k+1}$ indefinitely, or \emph{(ii)} depart from $q^*{k,k+1}$ at some point.
In the first case, the point $q^*{k,k+1}$ belongs to its own limit set.
In the second case, the trajectory lies in the basin of attraction of one of the adjacent attracting equilibrium points, $q^*_{k}$ or $q^*_{k+1}$, and therefore converges to one of them.
Thus, the limit set of the point $q^*{k,k+1}$ consists of $q^*_{k}$, $q^*{k,k+1}$, and $q^*_{k+1}$
    This completes the proof.
%       Define  $h(x) =x - q_k$ and $\Sigma = \{x \in [0,1]\vert h(x) = 0\}$.
%      Based on \eqref{eq:abstract}, the  dynamics for $h(x) < 0$ (resp. $h(x) > 0$) is $\dot{x} = 1 - x$ (resp. $\dot{x} =  - x$).
%      Denote the normal vector of $h(\x)$ by $\bm n_h$, which is equal to $\bm 1$.
%      The sign of $\bm n^{\top}_h (\bm \rho - \x) = 1- \alpha$ is positive, and the sign of $\bm n^{\top}_h ( - \x) = - \alpha$ is negative, implying that $h(\x) = 0$ is an attracting  surface for the population dynamics \eqref{eq:type_mixed}, i.e.,
%      once the state reaches $h(\x)$, it cannot leave it \cite{dieci2011sliding}.
%      Up to now, it has been shown that provided $\x_0 \in \prod_{j=1}^{{\p}}(0,{\rho}_j)$, the population dynamics \eqref{eq:type_mixed} converge to $h(\x) = 0$ and remain there afterwards.
%      For the trajectory to remain at $h(\x) = 0$, the value of  $\bm n_h^\top \dot{\x}(t)$ must be equal to zero.
%      The population dynamics at $h(\x)$ read as $\dot{\x} \in [1 - c(\x)]\bm \rho  - \x$
% where $c(\x) \in [0,1]$ satisfies $\bm n^{\top}_h\dot{\x} = 0$. 
% This yields
%    $ c(\x) = \frac{\bm n^{\top}_h (\bm \rho - \x)}{\bm n^{\top}_h(\bm \rho)}.$
% Substituting $\bm n^{\top}_h$ with $\bm 1$ results in 
%     $c(\x) = 1-\alpha$, and, in turn, 
%        $\dot{\x} = \alpha \bm \rho - \x$. 
%        The evolution of the population dynamics reads as $x_p(t) = \big(x_p(t_1) - \alpha \rho_p\big)\exp\big(-(t-t')\big) + \alpha \rho_p$ for $p \in [\p]$, which implies that $\x(t)$ converges to $\alpha \bm \rho$. 
\end{proof} 

% \begin{cor}
%     Under \Cref{ass:2}, the equilibrium point $q_1$ (resp. $q_{\mathtt{q}}$) is attracting and asymptotically stable
%     with basin of attraction $\mathcal{A}(q_1) = (q_1,q_2)$ (resp. $\mathcal{A}(q_1) = (q_{\mathtt{q}-1},q_{\mathtt{q}})$)
%     if $q_2$ (resp. $q_{\mathtt{q}-1}$) is a not-attracting equilibrium point.
%     Otherwise the equilibrium point $q_1$ (resp. $q_{\mathtt{q}}$)  is a non-attracting equilibrium point, and the equilibrium point $q_2$ (resp. $q_{\mathtt{q-1}}$)  is an attracting equilibrium point and asymptotically stable with basin of attraction $\mathcal{A}(q_2) = (q_1,q_3)$ (resp. $\mathcal{A}(q_{\mathtt{q}-1}) = (q_{\mathtt{q}-2},q_{\mathtt{q}})$).
% \end{cor}
\setcounter{revisit}{0}
\begin{revisit}
 The associated abstract dynamics with the finite imitation population dynamics introduced in Example \ref{exampleFinite} read as
 \tb{
  \begin{equation*}
            \dot{x} \in \begin{cases} 
        \{- x\}, & \text{if }   x \in (0, 0.018) \cup (0.509, 0.656),    \\
        \{1 - x\}, & \text{if } x \in (0.018, 0.509) \cup (0.656, 1),\\
         [- x, 1 - x],& \text{otherwise.}  \\
        \end{cases}
\end{equation*}
}
The abstract dynamics admit \tb{$5$} equilibrium points: \tb{$q_1 = 0$, $q_2 = 0.018$, $q_3 = 0.509$, $q_4 = 0.656$, $q_5 = 1$.}
\tb{The points $q_2$ and $q_4$ are unstable, and the others are asymptotically stable.
}
\end{revisit}
\subsection{Continuous-time Population Dynamics}
Here, using the results of the previous section, we obtain the equilibrium points and investigate the asymptotic behavior of the continuous-time population dynamics \eqref{eq:semicontinuous}.
% \subsection{Equilibrium Points and Convergence Analysis}
\begin{lemma} \label{lem_one2one}
   Under Assumptions \ref{ass:2} and \ref{ass:making-A-chosen-strategy-at-1},
   the continuous-time
imitation population dynamics \eqref{eq:semicontinuous} admit the same number of equilibrium points as the associated abstract dynamics \eqref{eq:abstract} and are characterized by
 $q_k \bm \rho$ for $k \in [\mathtt{q}]$.
\end{lemma}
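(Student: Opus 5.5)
The plan is to characterize directly the set of points $\x\in\bm{\mathcal{X}}_{s}$ with $\bm 0\in\bm{\V}(\x)$, and then match them one-to-one with the equilibria $q_1,\ldots,q_{\mathtt{q}}$ of the abstract dynamics, which \Cref{lem:equilibriumOfAbstract} identifies as $0$, $1$ and the intersection points of the maximum $\1$- and $\2$-utilities. The argument is a case split over the three branches of \eqref{eq:semicontinuous}.

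\emph{Single-valued branches.} Consider first the two single-valued branches, where $\x\notin\partial\bm{\mathcal{X}}_{s}$. Here $\bm 0\in\bm{\V}(\x)$ forces either $\x=\bm\rho$ or $\x=\bm 0$. Both of these points lie on $\partial\bm{\mathcal{X}}_{s}$, which is a contradiction. So these branches contribute no equilibria.

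\emph{Convex-hull branch.} In the remaining branch, $\bm 0\in\mathrm{Conv}(\bm\rho-\x,-\x)$ means there is $c\in[0,1]$ with $c(\bm\rho-\x)+(1-c)(-\x)=\bm 0$, that is, $\x=c\bm\rho$. Hence every equilibrium has the form $c\bm\rho$. Conversely, a point $c\bm\rho$ is an equilibrium exactly when it falls into this branch, i.e., when $c\bm\rho\in\partial\bm{\mathcal{X}}_{s}$ or $u^\1(c\bm\rho)=u^\2(c\bm\rho)$.

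\emph{Resolving the condition on $c$.} I will split according to whether $c\in\{0,1\}$.
\begin{enumerate}
\item If $c\in\{0,1\}$, then $c\bm\rho\in\{\bm 0,\bm\rho\}\subset\partial\bm{\mathcal{X}}_{s}$. These points are therefore equilibria, corresponding to $q_1=0$ and $q_{\mathtt{q}}=1$.
\item If $c\in(0,1)$, then since every $\rho_i>0$ we have $0<c\rho_i<\rho_i$ for all $i$. So $c\bm\rho$ is interior and both indicators $\mt{1}(x_i)$ and $\mt{1}(\rho_i-x_i)$ are active for every type. Because $\bm 1^\top c\bm\rho=c$, this gives $u^\1(c\bm\rho)=\max_i u^\1_i(c)$ and $u^\2(c\bm\rho)=\max_i u^\2_i(c)$. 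Thus $c\bm\rho$ is an equilibrium iff $c$ is an intersection point of the maximum $\1$- and $\2$-utilities in $(0,1)$.
\end{enumerate}
By \Cref{ass:making-A-chosen-strategy-at-1}, no such intersection occurs at $0$ or $1$. So the interior intersection points are precisely $q_2,\ldots,q_{\mathtt{q}-1}$, and the equilibrium set is $\{q_k\bm\rho: k\in[\mathtt{q}]\}$.

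\emph{Counting.} Finally, the map $c\mapsto c\bm\rho$ is injective because $\bm\rho\neq\bm 0$. The equilibria $q_k\bm\rho$ are therefore distinct, and their number equals $\mathtt{q}$.

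The only delicate step is the reduction in the case $c\in(0,1)$: the utility comparison in \eqref{eq:semicontinuous} involves the indicator-weighted maxima \eqref{eq_max_C_utility}--\eqref{eq_max_D_utility}, not the abstract maxima. One must check that on the ray $c\bm\rho$ with $c\in(0,1)$ no type is absent from either strategy, so that the two comparisons coincide. This holds because all entries $\rho_i$ are positive.
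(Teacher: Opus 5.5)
Your proof is correct and takes essentially the same route as the paper's: you exclude the single-valued branches, reduce $\bm 0\in\mathrm{Conv}(\bm\rho-\x,-\x)$ to $\x=c\bm\rho$ with $c=\bm 1^\top\x$, and identify the admissible $c$ as $0$, $1$, and the intersection points of the maximum utilities. Your observation that $c\bm\rho$ is interior for every $c\in(0,1)$ replaces the paper's separate contradiction argument that rules out the other boundary points, and your explicit checks of the indicator reduction and of the injectivity of $c\mapsto c\bm\rho$ make the counting claim fully rigorous.
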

\begin{proof}
There exists no point $\x \in \bm{\mathcal{X}}_{s}$ that satisfies either the first or second case in \eqref{eq:semicontinuous} while also satisfying $\bm{0} \in \bm{\mathcal{V}}(\x)$.
Hence, we should investigate the last case to obtain the equilibrium points of the \eqref{eq:semicontinuous}.
The last case in \eqref{eq:semicontinuous} is applied to the boundary points and the points $\x$ at which
\begin{equation} \label{eq_proof}
\max_{i\in[\p]} u^\1_i(\bm 1^\top \x) =\max_{i\in[\p]} u^\2_i(\bm 1^\top \x).
\end{equation}
The equality \eqref{eq_proof} holds at the intersection points of the maximum $\1-$ and  $\2-$utilities.
On the other hand, the maximum $\1-$ and  $\2-$utilities 
at the equilibrium points of the abstract dynamics also intersect (\Cref{lem:equilibriumOfAbstract}).
This implies that the value of the inner product $\bm 1^\top \x$ for $\x$ satisfying \eqref{eq_proof} will equal exactly one of the abstract equilibrium points $q_k$, $k \in \{2,3,\ldots, \mathtt{q}-1\}$.
% ascending order where $q_2$ is the smallest and $q_{\mathtt{q}-1}$ is the largest value.
% Define $q_1 = 0$ and $q_{\mathtt{q}} = 1$.

If $ \bm 0 \in  \mathrm{Conv}(-\x, \bm \rho -\x)$,
the point $\x$ is an equilibrium point for \eqref{eq:semicontinuous}.
The relation $ \bm 0 \in  \mathrm{Conv}(-\x, \bm \rho -\x)$ 
is satisfied if there exists some $\lambda^* \in [0,1]$ such that $\bm 0 = \lambda^* \bm \rho - \x$.
The inner product of the relation $\bm 0 = \lambda^* \bm \rho - \x$ and an all-one vector $\bm 1$ results in 
$\lambda^* = \bm 1^\top \x$.
\tb{At the points of the intersection, $\lambda^*$ will be equal to $q_k$.}
Therefore, the points $q_k \bm \rho$ are the equilibrium points of the continuous-time dynamics.
\tb{Taking $\lambda^*$ equal to $0$ and $1$, shows that $\bm 0$ and $\bm \rho$ are, respectively, also equilibria.}
Now, we show by contradiction that, excluding the points $\bm 1$ and $\bm \rho$, the remaining boundary points are not equilibria for \eqref{eq:semicontinuous}.
 Assume that $\x^* \in \partial \bm{\mathcal {X}}_{s}$ is an equilibrium where $x^*_i =  0$ and $x^*_j \neq 0$ for some  $i,j \in [\p]$.
 This implies that for some $\lambda \in [0,1]$, we have $\bm 0 = \lambda \bm \rho - \x^*$ and, in turn, $\lambda \rho_l - x^*_l =0$ for $l = 1,2,\ldots, \p$.
 The condition $\lambda \rho_i - x^*_i =0$ requires $\lambda =0$.
 But a zero-valued parameter $\lambda$  results in $\lambda \rho_j-x^*_j \neq 0$.
 This contradicts the assumption that the point $\x^*$ is an equilibrium.
 The boundary points where  $x^*_i =  \rho_i$ and $x^*_j \neq \rho_j$ for some  $i,j \in [\p]$ can be handled similarly.
 This completes the proof.
    % In view of \eqref{eq:semicontinuous}, at every other equilibrium point, the abstract state, $x_e$, should satisfy $\mathtt{s}(x_e) = \{\1,\2\}$.
    % This implies that the abstract state
    % $x_e$ should be  at  one of the interior equilibrium points, i.e., $x_e = q_k$ for some $k$ where $k= 2,3,\ldots,\mathtt{q}-1$.
    % This yields $\sum_{i \in [\p]} x_i = x_e$.
    % In addition, at equilibrium point $\bm x_e$ we should have
    % $\bm 0 \in [\bm 0, \bm \rho] - \x_e.$
    % Note that, the last case in \eqref{eq:semicontinuous} is indeed the convex hull of the two vector fields $\bm \rho - \x$ and $-\x$ where for an arbitrarily $\alpha \in [0,1]$  results in
    % $\alpha \bm \rho = \x_e$. 
    % In view of $\bm x_e^\top \bm 1 = q_k$, we conclude that $\alpha = q_k$.
    % Hence, the point $q_k \bm \rho$ is the equilibrium point of \eqref{eq:semicontinuous}.
    % Since the last case in  \eqref{eq:semicontinuous} is the convex hull of the two vector fields, the semicontinuous dynamics at $x_e = q_k$ cannot admit any other equilibrium points as other than $q_k \bm \rho$ and  
    % this completes the proof.
\end{proof}
 % Inspired by \cite{FU2024111354}, we call an equilibrium point $q_k\bm \rho \notin \{\bm 0, \bm \rho \}$ \emph{attracting} if 
 %  $\exists \delta >0 \forall \varepsilon \in (0,\delta)
 %    \max_{i \in [\p]} u^\1_i(q_k-\varepsilon) >  \max_{i \in [\p]} u^\2_i(q_k-\varepsilon).$       
 %    Similarly, we call an equilibrium point $q_k\bm \rho \notin \{0,1\}$ \emph{not-attracting} if 
 %     $\exists \delta >0 \forall \varepsilon \in (0,\delta)
 %    \max_{i \in [\p]} u^\2_i(q_k-\varepsilon) >  \max_{i \in [\p]} u^\1_i(q_k-\varepsilon).$   

 The next three lemmas describe the asymptotic behavior of the continuous-time dynamics for different initial conditions.
\begin{lemma} 
\label{lem:converges-semicontinuous-no-boundary}
Under Assumptions \ref{ass:2} and \ref{ass:making-A-chosen-strategy-at-1}, starting from initial condition $\x_0 \in \bm{\mathcal{I}}(k)$ where  
        $$\bm{\mathcal{I}}(k) =  \{\bm x  \in \mathrm{int}(\bm{\X}_{s}) \vert \bm 1^\top \x \in (q^*_{k-1,k}, q^*_{k,k+1})\}, \text{ for } k \in [\mathtt{a}],$$
        the trajectories of the continuous-time population dynamics \eqref{eq:semicontinuous}
        converge to 
      $\bm \rho q^*_k$.
\end{lemma}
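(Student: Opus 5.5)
The plan is to reduce the $\p$-dimensional dynamics to the scalar abstract dynamics through the quantity $x(t)=\bm 1^\top\x(t)$. Then we show that, once $x(t)$ has converged, the full state is forced onto the ray through $\bm\rho$.

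\emph{Step 1: the interior is forward invariant.} In the interior of $\bm{\X}_{s}$ every selection of \eqref{eq:semicontinuous} has the form $\lambda\bm\rho-\x$ with $\lambda\in[0,1]$. Componentwise this reads $\dot x_p=\lambda(t)\rho_p-x_p$ with $\lambda(t)\in[0,1]$ measurable. By variation of constants,
\[
x_p(t)=e^{-t}x_p(0)+\rho_p\int_0^t e^{-(t-s)}\lambda(s)\,ds .
\]
This quantity stays strictly between $0$ and $\rho_p$ whenever $0<x_p(0)<\rho_p$. Hence a solution starting in $\mathrm{int}(\bm{\X}_{s})$ never reaches $\partial\bm{\X}_{s}$. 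On the interior, the case distinction in \eqref{eq:semicontinuous} depends only on $\bm 1^\top\x$, because the indicator terms in \eqref{eq_max_C_utility}--\eqref{eq_max_D_utility} are all active.

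\emph{Step 2: the scalar reduction.} Summing the components, and using $\bm 1^\top\bm\rho=1$, gives $\dot x=\lambda(t)-x$. Here $\lambda=1$ when the maximum $\1$-utility exceeds the maximum $\2$-utility at $x$, $\lambda=0$ in the opposite case, and $\lambda\in[0,1]$ when they are equal. Therefore $x(t)$ is a solution of the abstract dynamics \eqref{eq:abstract} with $x(0)\in(q^*_{k-1,k},q^*_{k,k+1})$. By \Cref{lem:asym-stablility-abstract}, part 1, this gives $x(t)\to q^*_k$ (with the obvious modifications for $k=1$ and $k=\mathtt{a}$, where the interval is half-open and contains $0$ or $1$).

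\emph{Step 3: convergence of the full state.} Write $\x=x\bm\rho+\bm w$ with $\bm 1^\top\bm w=0$. Then
\[
\dot{\bm w}=\lambda\bm\rho-\x-\dot x\,\bm\rho=\lambda\bm\rho-x\bm\rho-\bm w-(\lambda-x)\bm\rho=-\bm w,
\]
so $\bm w(t)=e^{-t}\bm w(0)\to\bm 0$ regardless of the selection. Combined with Step 2, this yields $\x(t)\to q^*_k\bm\rho$.

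\emph{Main obstacle.} The delicate point is the boundary, which Step 1 excludes. A related issue is checking that every Filippov/Krasovskii-type solution of the inclusion yields an absolutely continuous $x(\cdot)$ whose derivative lies in $\mathcal{X}(x)$ almost everywhere, including at times spent on the switching set $\{\bm 1^\top\x=q_j\}$. There the inclusion allows any $\lambda\in[0,1]$, which matches the third case of \eqref{eq:abstract}, so the correspondence should go through. The second point needing care is that the convergence in \Cref{lem:asym-stablility-abstract} must hold for all abstract solutions, not just one. The explicit monotone solutions $1-x$ and $-x$ used in its proof give this, since the basin is open and contains no non-attracting equilibria.
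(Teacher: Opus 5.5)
Your proof is correct. Steps 1--2 essentially match the paper's reduction: its appendix lemma shows that Carath\'eodory solutions from the interior stay there and that $\bm 1^\top\x(\cdot)$ solves the abstract inclusion. Your final step, however, takes a different route.

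From the reduction onward, the paper argues as follows. It computes the finite time $t_1$ at which $\bm 1^\top\x$ reaches $\alpha=q^*_k$. It observes that $\bm\rho-\x$ and $-\x$ both point into $\Sigma=\{\bm 1^\top\x=\alpha\}$. It then invokes Filippov's uniqueness theorem and the Dieci--Lopez sliding construction to conclude that the solution slides with $\lambda\equiv\alpha$, i.e.\ $\dot\x=\alpha\bm\rho-\x$. Finally it reads off $x_i(t)-\alpha\rho_i=e^{-(t-t_1)}\bigl(x_i(t_1)-\alpha\rho_i\bigr)$.

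Your decomposition $\x=x\bm\rho+\bm w$ with $\dot{\bm w}=-\bm w$ is that same formula, but valid for every measurable selection $\lambda(\cdot)$ and along the whole trajectory. So convergence of $\x$ reduces exactly to convergence of the scalar. You need neither uniqueness, nor identification of the sliding field, nor a finite hitting time. This matters at $k=1$ and $k=\mathtt{a}$: there $\alpha\in\{0,1\}$ is never reached in finite time from the interior, so the paper's $t_1$ is not finite, whereas your argument covers these cases uniformly. It also gives convergence of \emph{all} solutions directly, which is what the statement asks for.

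What the paper's route buys in return is extra information not needed here. It gives forward uniqueness from $\bm{\mathcal I}(k)$ and an explicit description of the trajectory: arrival on $\Sigma$ at time $t_1$, then sliding at $\lambda\equiv q^*_k$.

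A minor remark: $\bm{\mathcal V}(\x)\subseteq\mathrm{Conv}(\bm\rho-\x,-\x)$ on all of $\bm{\mathcal X}_s$, so the representation $\dot\x=\lambda(t)\bm\rho-\x$ holds globally. Hence your variation-of-constants bound in Step 1 applies directly without a first-exit-time argument, and $\dot{\bm w}=-\bm w$ holds even for solutions touching the boundary.
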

% In view of \eqref{eq:semicontinuous} and \Cref{ass:1}, the solution to the semicontinuous dynamics with initial condition $\x(0)$ satisfying $\bm 1^\top \x(0) \in \mathcal{A}(q_k)$ can be explicitly calculated. 
% The proof of the above lemma is then based on the definition of the stability.
\begin{proof}
  Given $\x_0 \notin \partial \bm{\X}_{s}$, it can be shown that, for any finite time, the set of solution for the abstract state is the same as that of the population proportion of $\1-$players, $\bm 1^\top \x$ (\Cref{prop_apendix})
  % implying that the population proportion of cooperators $x = \bm 1^\top \x$ and the abstract state are equal.
     Then, in view of \eqref{eq:semicontinuous},
     if the initial abstract state lies in the basin of attraction of the equilibrium point  $q^*_k$, henceforth denoted by $\alpha$,
      the dynamics read as $\dot{\x} = \bm \rho - \x$ (resp. $\dot{\x} =  - \x$) for
      $\x^{\top}_0\bm 1 < \alpha$ (resp. $\x^{\top}_0\bm 1 >\alpha$), and, accordingly, the trajectory of the abstract state equals $x(t) = 1 + (x_0 - 1)e^{-t}$ (resp. $x(t) = x_0 e^{-t}$).
      The abstract state will reach $\alpha$ at  $t_1$, where $t_1 = \ln{\frac{1-x_0}{1-\alpha}}$ (resp.  $t_1 = \ln\frac{x_0}{\alpha}$), \tb{and hence the population state will reach the}
      hyperplane $\Sigma$ defined as $\{\x \in \mathbb{R}^\p \vert  \x^{\top}\bm 1 - \alpha = 0\}$ at $t_1$.
     The normal vector of $\Sigma$ is equal to $\bm 1$. 
     \tb{(\emph{i})} The inner product of the normal vector and the vector field for $\x^\top \bm 1 < \alpha$ is positive
     ($\tb{f_1:}\bm 1^\top (\bm \rho - \x) >0$), and
     \tb{(\emph{ii})} the inner product of the normal vector and the vector field for $\x^\top \bm 1 > \alpha$ is negative
     ($\tb{f_2:} \bm 1^\top ( - \x) <0$).
     \tb{(\emph{iii}) In addition, the vector fields $\bm \rho - \x$, $-\x$, and the equation defining hyperplane $\Sigma$ ($\x^{\top}\bm 1 - \alpha = 0$) are smooth. Hence, in view of \cite[Theorem 2, p. 110]{filippov}, the solution to the population dynamics is unique.
     It remains to construct the unique solution on the hyperplane.
     Given $f_1 >0$ and $f_2 <0$, we construct a solution according to which once the trajectory reaches $\Sigma$, it remains there afterwards }\cite[section 2.1, p. 2026]{dieci2009sliding}.
     In this regard, the value of  $\bm 1^\top \dot{\x}(t)$ must be equal to zero so the trajectory does not leave the hyperplane.
     Given, the dynamics at $\Sigma$ read as $\dot{\x} \in \mathrm{Conv}(\bm \rho - \x, -\x)$, \tb{or, equivalently $\dot{\x} = \lambda \bm \rho - \x$ for some $\lambda \in [0,1]$, we have $\bm 1^\top (\lambda \bm \rho - \x) =0$. Given $\bm 1^\top \x$ on $\Sigma$ equals $\alpha$,}
 it yields $\dot{\x}(t) = \alpha \bm \rho - \x(t)$. 
     \tb{This trajectory is absolutely continuous and satisfies \eqref{eq:semicontinuous} for almost every $t$, hence it is a (unique) solution.  }
       The evolution of the proportion of $\1-$players in each type $i$, $i \in [\p]$, is then equal to  $x_i(t) = \big(x_i(t_1) - \alpha \rho_i\big)\exp\big(-(t-t_1)\big) + \alpha \rho_i$.
       which implies that the term $\vert \x(t) -\alpha \bm \rho \vert$ for $t >t_1$ is strictly decreasing and converges to zero exponentially.
This completes the proof.
\end{proof} 
\begin{lemma} \label{lem:covergence_unstable_no_boundary}
    Under Assumptions \ref{ass:2} and \ref{ass:making-A-chosen-strategy-at-1},
    starting from initial condition $\x_0 \in \bm{\mathcal{R}}(k)$  where
        $$\bm{\mathcal{R}}(k) = \{\bm x  \in \mathrm{int}(\bm{\X}_{s}) \vert \bm 1^\top \x = q^*_{k,k+1}\}, \text{  for $k \in [\mathtt{a}-1],$}$$
        the trajectories of the continuous-time population dynamics \eqref{eq:semicontinuous} converge to $q^*_{k,k+1} \bm \rho$, $q^*_k \bm \rho$, or  $q^*_{k+1} \bm \rho$.
\end{lemma}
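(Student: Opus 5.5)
We will follow the template of \Cref{lem:converges-semicontinuous-no-boundary}, combined with part 2 of \Cref{lem:asym-stablility-abstract}. Write $\beta = q^*_{k,k+1}$ and let $\x(t)$ be any solution with $\x(0)=\x_0\in\bm{\mathcal{R}}(k)$.

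First, since $\x_0\in\mathrm{int}(\bm{\X}_{s})$, the dynamics \eqref{eq:semicontinuous} near $\x_0$ are determined by $\bm 1^\top\x$, as in \Cref{prop_apendix}. We split time by $\tau=\sup\{t\ge 0 : \bm 1^\top \x(s)=\beta \text{ for all } s\in[0,t]\}\in[0,\infty]$.

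\emph{Case 1: $\tau=\infty$.} The trajectory stays on the hyperplane $\Sigma_\beta=\{\x : \bm 1^\top\x=\beta\}$ for all time. For almost every $t$, $\dot\x\in\mathrm{Conv}(\bm\rho-\x,-\x)$, so $\dot\x=\lambda(t)\bm\rho-\x$ with $\lambda(t)\in[0,1]$. Since $\bm 1^\top\dot\x=0$ a.e. and $\bm 1^\top\bm\rho=1$, we get $\lambda(t)=\bm 1^\top \x=\beta$. Hence $\dot\x=\beta\bm\rho-\x$, so $x_i(t)=(x_{0,i}-\beta\rho_i)e^{-t}+\beta\rho_i$. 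This stays in $\mathrm{int}(\bm{\X}_s)$, because it is a convex combination of the interior point $\x_0$ and $\beta\bm\rho$, and it converges to $q^*_{k,k+1}\bm\rho$.

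\emph{Case 2: $\tau<\infty$.} At $t=\tau$ the state is $\x(\tau)$, and by the same computation as in Case 1 it lies in $\mathrm{int}(\bm{\X}_s)$ with $\bm 1^\top\x(\tau)=\beta$. Arbitrarily soon after $\tau$ there are times with $\bm 1^\top\x\neq\beta$. We claim that in fact $\bm 1^\top\x(t)>\beta$ for all $t\in(\tau,\tau+\eta)$ for some $\eta>0$, or $\bm 1^\top\x(t)<\beta$ for all such $t$. The reason is that $\beta$ is non-attracting. By \Cref{ass:2} and the definition of non-attracting, $\dot{\bar u}^\1(\beta)>\dot{\bar u}^\2(\beta)$, so the $\1$-utility exceeds the $\2$-utility just to the right of $\beta$ and the reverse holds just to the left. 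In a neighbourhood of $\Sigma_\beta$, the field is therefore $\bm\rho-\x$ on the side $\bm 1^\top\x>\beta$, where $\bm 1^\top(\bm\rho-\x)=1-\bm 1^\top\x>0$. On the side $\bm 1^\top\x<\beta$ the field is $-\x$, where $\bm 1^\top(-\x)<0$. Thus $\Sigma_\beta$ is repelling in the Filippov sense: the scalar $y=\bm 1^\top\x$ satisfies $\dot y\ge 1-y>0$ whenever $y>\beta$ is close to $\beta$, and $\dot y<0$ whenever $y<\beta$ is close to $\beta$. So once $y$ leaves $\beta$ it cannot return, and its sign relative to $\beta$ is locally constant.

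At some time $t_2>\tau$ we then have $\x(t_2)\in\mathrm{int}(\bm{\X}_s)$ with $\bm 1^\top\x(t_2)\in(q^*_{k},\beta)$ or $\bm 1^\top\x(t_2)\in(\beta,q^*_{k+1})$. These sets are $\bm{\mathcal{I}}(k)$ and $\bm{\mathcal{I}}(k+1)$ respectively. Applying \Cref{lem:converges-semicontinuous-no-boundary} from time $t_2$ gives convergence to $q^*_k\bm\rho$ or $q^*_{k+1}\bm\rho$.

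The main obstacle is the rigor in Case 2. We must show that a solution leaving $\Sigma_\beta$ does so cleanly, without chattering back and forth infinitely often near $\tau$. We must also check that it stays in the interior, so that the abstract description remains valid. For the first point we will use a scalar comparison on $y(t)=\bm 1^\top\x(t)$: the function $|y-\beta|$ is nondecreasing near $\beta$ because of the sign conditions above, so the set $\{t : y(t)\ne\beta\}$ after $\tau$ is an interval. For interiority we will note that on $[0,t_2]$ every velocity has the form $\lambda\bm\rho-\x$ with $\lambda\in[0,1]$, so each coordinate $x_i$ moves toward $\lambda\rho_i\in[0,\rho_i]$ at a finite exponential rate and cannot reach $0$ or $\rho_i$ in finite time.
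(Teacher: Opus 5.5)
Your proof is correct and follows essentially the same route as the paper. You identify the repelling discontinuity hyperplane $\bm 1^\top\x=q^*_{k,k+1}$ and show that a trajectory staying on it satisfies $\dot\x=q^*_{k,k+1}\bm\rho-\x$. You then hand any trajectory that leaves it to \Cref{lem:converges-semicontinuous-no-boundary}.

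Your exit time $\tau$ and the monotonicity of $|\bm 1^\top\x-q^*_{k,k+1}|$ make rigorous two points that the paper's three-case split (citing Dieci--Lopez) leaves implicit: the \emph{slide for a while, then leave} scenario, and the absence of chattering. One minor correction: the sets $\{\x\in\mathrm{int}(\bm{\X}_s)\mid\bm 1^\top\x\in(q^*_k,q^*_{k,k+1})\}$ and $\{\x\in\mathrm{int}(\bm{\X}_s)\mid\bm 1^\top\x\in(q^*_{k,k+1},q^*_{k+1})\}$ are subsets of $\bm{\mathcal{I}}(k)$ and $\bm{\mathcal{I}}(k+1)$, not equal to them.
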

\begin{proof}
  Given $\x_0 \notin \partial \bm{\X}_{s}$, for any finite time, the evolution of the abstract state is the same as that of the population proportion of $\1-$players, $\bm 1^\top \x$.
  The initial state belongs to the hyperplane $\Sigma = \{\x \in \mathbb{R}^\p \vert  \x^{\top}\bm 1 - q^*_{k,k+1} = 0\}$.
   The normal vector of $\Sigma$ is equal to $\bm 1$. 
   The inner product of the normal vector and the vector field for $\x^\top \bm 1 < \alpha$ is negative  ($\bm 1^\top ( - \x) =  -\alpha$), and the inner product of the normal vector and the vector field for $\x^\top \bm 1 > \alpha$ is positive ($\bm 1^\top (\bm \rho - \x) =1 - \alpha$).
    This implies that the solution trajectories is not unique  \cite[section 2.1, p. 2026]{ dieci2011sliding}. 
    Three case can happen:
    the trajectories can \emph{(i)} follow the vector field $\bm \rho - \x$, \emph{(ii)}  follow $-\x$, or \emph{(iii)} slide along the surface $\Sigma$.
    In  cases \emph{(i)} and \emph{(ii)}, the abstract state enters $\mathcal{A}(q^*_{k+1})$ and $\mathcal{A}(q^*_{k})$, respectively.
     By \Cref{lem:converges-semicontinuous-no-boundary}, the trajectories therefore converge to $q^*_{k+1}\bm \rho$ in case \emph{(i)} and to $q^*_{k}\bm \rho$ in case \emph{(ii)}.
     In case \emph{(iii)}, when the trajectory of the abstract state remains on $\Sigma$, then, based on a similar argument provided in the proof of  \Cref{lem:converges-semicontinuous-no-boundary},
     the continuous-time dynamics reduce to $q^*_{k,k+1}\bm \rho - \x$, and, consequently, the trajectory approaches $q^*_{k,k+1}\bm \rho$.
     This completes the proof.
\end{proof}
\begin{lemma} \label{lem:convergence_boundary}
     Under Assumptions \ref{ass:2} and \ref{ass:making-A-chosen-strategy-at-1}, the trajectories of the continuous-time population dynamics starting from the initial condition $\x_0 \in \partial \bm{\X}_{s}$  converge to  $q_k \bm \rho$ for some $k \in [\mathtt{q}]$.
\end{lemma}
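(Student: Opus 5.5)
The plan is to show that any solution starting on $\partial \bm{\X}_{s}$ either stays at an equilibrium or enters $\mathrm{int}(\bm{\X}_{s})$ in arbitrarily short time. After that, \Cref{lem:converges-semicontinuous-no-boundary} and \Cref{lem:covergence_unstable_no_boundary} finish the argument via the semigroup (concatenation) property of solutions of the differential inclusion.

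The key observation is that every admissible velocity on the boundary has the form $\dot{\x} = \lambda(t)\bm\rho - \x$ with $\lambda(t)\in[0,1]$, since $\mathrm{Conv}(\bm\rho-\x,-\x)=\{\lambda\bm\rho-\x:\lambda\in[0,1]\}$. In the interior the same form holds with $\lambda\in\{0,1\}$. Hence, for every solution and every type $i$, variation of constants gives
\[
x_i(t) = e^{-t}x_{0,i} + \rho_i\int_0^t e^{-(t-s)}\lambda(s)\,ds ,
\]
so that $x_i(t)/\rho_i - x_j(t)/\rho_j = e^{-t}\big(x_{0,i}/\rho_i - x_{0,j}/\rho_j\big)$ for all $i,j$. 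Consequently every trajectory satisfies
\[
\x(t) - \bm 1^\top\x(t)\,\bm\rho \;\to\; \bm 0
\]
exponentially, regardless of which selection is taken. It therefore suffices to show that the scalar $y(t)=\bm 1^\top\x(t)$ converges to some $q_k$.

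\textbf{Step 1 (leaving the boundary).} Take $\x_0\in\partial\bm{\X}_{s}\setminus\{\bm 0,\bm\rho\}$. I will argue that for every $t>0$ either $\x(t)\in\mathrm{int}(\bm{\X}_{s})$, or $\x$ is a boundary trajectory with $x_i(t)\in\{0,\rho_i\}$ for some $i$. From the formula above, $x_i(t)=0$ for $t>0$ forces $x_{0,i}=0$ and $\lambda\equiv 0$ on $[0,t]$; but then $x_j(t)=e^{-t}x_{0,j}$, which stays positive and below $\rho_j$ for types with $x_{0,j}>0$. Symmetrically, $x_i(t)=\rho_i$ forces $\lambda\equiv 1$. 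So a solution remaining on the boundary must follow $\dot{\x}=-\x$ or $\dot{\x}=\bm\rho-\x$ on an interval, and these converge to $\bm 0$ and $\bm\rho$ respectively; both are $q_1\bm\rho$ and $q_{\mathtt q}\bm\rho$ by \Cref{lem_one2one}. I will handle the mixed case (some coordinates at $0$, others at $\rho_j$) by noting it is incompatible with any $\lambda$ for positive time. Thus, apart from such pure flows, the trajectory lies in $\mathrm{int}(\bm{\X}_{s})$ at every positive time $t_1$, arbitrarily small.

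\textbf{Step 2 (restart in the interior).} At time $t_1$, $\bm 1^\top\x(t_1)$ lies either in some open interval $(q^*_{k-1,k},q^*_{k,k+1})$ or equals some $q^*_{k,k+1}$. Since the restriction of a solution to $[t_1,\infty)$ is again a solution, \Cref{lem:converges-semicontinuous-no-boundary} and \Cref{lem:covergence_unstable_no_boundary} apply and give convergence to some $q_k\bm\rho$.

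\textbf{Main obstacle.} The hardest step is Step 1's claim that a solution cannot linger on the boundary with $\lambda$ taking intermediate values while $y$ drifts. I will first try the explicit integral formula above, which rules this out as shown. If that route has gaps, the fallback is to note that by the ratio identity the deviation from the ray $\{c\bm\rho\}$ decays, which reduces the question to the scalar abstract dynamics \eqref{eq:abstract}. Its convergence is given by \Cref{lem:asym-stablility-abstract}, using \Cref{ass:making-A-chosen-strategy-at-1} near the extremes.
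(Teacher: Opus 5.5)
Your proposal is correct in substance and follows the same skeleton as the paper's proof. A solution that stays on $\partial\bm{\X}_{s}$ forever must be the pure flow $\dot{\x}=-\x$ or $\dot{\x}=\bm\rho-\x$, and so tends to $\bm 0$ or $\bm\rho$; otherwise it enters $\mathrm{int}(\bm{\X}_{s})$ and the interior lemmas finish the job.

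The paper says ``it can be shown'' at this point and splits into cases according to whether $\mathcal{P}_0$, $\mathcal{P}_1$ are empty. Your variation-of-constants formula proves the key fact uniformly. The formula holds for every Carath\'eodory solution, since $\bm{\V}(\x)\subseteq\mathrm{Conv}(\bm\rho-\x,-\x)$ everywhere. The key fact is that $x_i(t)=0$ (resp.\ $x_i(t)=\rho_i$) at some $t>0$ forces $x_{0,i}=0$ (resp.\ $x_{0,i}=\rho_i$) and $\lambda=0$ (resp.\ $\lambda=1$) a.e.\ on $[0,t]$.

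You also correctly bring in \Cref{lem:covergence_unstable_no_boundary}. The paper's proof cites only \Cref{lem:converges-semicontinuous-no-boundary}, yet entry into the interior at a non-attracting level genuinely occurs. For example, take $\x_0\in\partial\bm{\X}_{s}$ with $\bm 1^\top\x_0=q^*_{k,k+1}$ and $\lambda\equiv q^*_{k,k+1}$.

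One intermediate claim is false and should be replaced. It is not true that, apart from pure flows, the solution lies in the interior at every (arbitrarily small) positive time. From a face $\{x_i=0\}$, a solution may follow $-\x$ on $[0,T]$ and only afterwards enter the interior, e.g.\ when $\1$ is preferred at the level $\bm 1^\top\x(T)$. Also, a pure flow on a finite interval says nothing about the limit.

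What your formula actually gives is two facts:
\begin{enumerate}
\item If $t>0$ is a boundary time, then so is every $s\in[0,t]$.
\item Two positive boundary times $t,t'$ cannot force $\lambda\equiv 0$ and $\lambda\equiv 1$ respectively, since these conflict on $[0,\min(t,t')]$.
\end{enumerate}
Hence the boundary times form an initial segment $[0,T^*]$. If $T^*=\infty$, then $\lambda\equiv 0$ or $\lambda\equiv 1$ on all of $[0,\infty)$. If $T^*<\infty$, your Step~2 applies at any $t_1>T^*$.

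This argument also covers $\x_0\in\{\bm 0,\bm\rho\}$, which you set aside without justification. For a differential inclusion, being an equilibrium does not by itself keep every solution there.

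Finally, your fallback is not valid as stated. On boundary arcs, $y=\bm 1^\top\x$ may use any $\lambda\in[0,1]$ even where one strategy is strictly preferred, so $y$ need not solve \eqref{eq:abstract}. Restricted to $[T^*,\infty)$, however, the fallback does work: $y$ then solves \eqref{eq:abstract} and converges by \Cref{lem:asym-stablility-abstract}. Your identity $\x(t)-y(t)\bm\rho=e^{-t}\big(\x_0-(\bm 1^\top\x_0)\bm\rho\big)$ then transfers this convergence to $\x$, without needing the sliding-mode arguments.
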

\begin{proof}
For an initial condition $\x_0 \in \partial \bm{\X}_{s}$, let $\mathcal{P}_0$ \tb{be the set of indices of the zero entries of $\x_0$}, and let $\mathcal{P}_1$ be the set of indices at which $\x_0$ and $\bm \rho$ have equal entries. 
   The relation $\x_0 \in \partial \bm{\X}_{s}$ yields  $\mathcal{P}_{0} \cap \mathcal{P}_{1} \neq \emptyset$  and that the initial state lies on the intersection, denoted $\Sigma$, of the following hyperplanes
    $\Sigma_i = \{\x \in \bm{\X}_{s} \mid x_i=0 \}$ for $i \in \mathcal{P}_0$ and $\Sigma_j = \{\x \in \bm{\X}_{s} \mid x_j= \rho_j \}$ for $j \in \mathcal{P}_1$.
    % In a neighborhood of the intersection of these surfaces, there are $2^\p$ regions characterized based on the value of $x_i$ and $x_j$ for $i \in \mathcal{P}_0$ and $j \in \mathcal{P}_1$.
    % For those regions not belonging to $\bm{\X}_{s}$, following 
    % \cite[2.A.4]{sandholm2010population}, we consider the closet point projection of $\mathbb{R}^\p$ onto the set $\bm{\X}_{s}$ defined by $\Pi_{\bm{\X}_{s}}: \mathbb{R}^\p \to \bm{\X}_{s}$, where $\Pi_{\bm{\X}_{s}}(\y) = \text{argmin}_{\x \in \bm{\X}_{s}} |\y - \x|$.
    % This projection results that the vector fields in these regions belong to the conve
    % % $\{\x \in \mathbb{R}^\p \mid x_i<0, x_j < \rho_j,  i \in \mathcal{P}_0,j\in \mathcal{P}_1 \}$,
    % % $\{\x \in \mathbb{R}^\p \mid x_i<0, x_j > \rho_j,  i \in \mathcal{P}_0,j\in \mathcal{P}_1 \}$,
    % % $\{\x \in \mathbb{R}^\p \mid x_i>0, x_j < \rho_j, ,  i \in \mathcal{P}_0,j\in \mathcal{P}_1\}$, or $\{\x \in \mathbb{R}^\p \mid x_i>0, x_j > \rho_j,  i \in \mathcal{P}_0,j\in \mathcal{P}_1  \}$.
    % Although except for the 
    % The sliding motion on the intersection of these surfaces would require $\bm n^\top_i \bm f_F = 0$ for every $i \in \mathcal{P}_0 \bigcup \mathcal{P}_1$, where $\bm f_F$ belongs to the convex hull of the vector fields in the neighborhood of the intersection. 
    % T
    Part 1) Assume that either of the sets $\mathcal{P}_0$ or $\mathcal{P}_1$, but not both, is nonempty.
    In this case, the dynamics lie in the convex hull of vector fields $-\x$ and $\bm \rho - \x$, i.e., $\dot{\x} = \lambda \bm \rho - \x$ for some $\lambda \in [0,1]$.
    It can be shown that
    if $\mathcal{P}_1 \neq \emptyset$ (resp.  $\mathcal{P}_0 \neq \emptyset$ ) and the trajectories remain on $\partial \bm{\X}_{s}$ for all $t>0$, the dynamics will be equal to $\bm \rho - \x$ (resp.  $  - \x$) and,  accordingly, the trajectory will converge to $\bm \rho$ (resp. $\bm 0$).
    Otherwise, the trajectories at some point will leave $\partial \bm{\X}_{s}$, and 
 according to \Cref{lem:converges-semicontinuous-no-boundary}, will converge to $q_k \bm \rho$ for some $k \in \{2,3,\ldots,\mathtt{q}\}$.
   Part 2) Now, assume that both of these sets are nonempty.
    Then, either
    \emph{(i)} the trajectories remain on $\Sigma$ for all $t >0$, 
    or \emph{(ii)} the trajectories leave  $\Sigma$ at some finite time $t >0$.
    In case \emph{(i)}, the dynamics lie in the convex hull of $-\x$ and $\bm \rho -\x$ resulting $\dot{\x} = \lambda \bm \rho - \x$, for some $\lambda \in [0,1]$.
    The value of $\lambda$ should be equal to $0$ if the trajectories were to remain on hyperplane $\Sigma_i$, whereas the value of $\lambda$ should equal $1$ to have the trajectories remain on hyperplane $\Sigma_j$.
    The contradiction implies that the trajectory cannot remain on $\Sigma$.
    Hence, case  \emph{(ii)} must occur, implying that the trajectories will leave the intersection of hyperplanes characterized by $x_i =0$ for some $i \in \mathcal{P}_0$ and $x_j = \rho_j$ for some $j \in \mathcal{P}_1$.
    Once the trajectories leave either types of hyperplanes, a similar argument to that in Part 1 implies one of the two possible conclusions: either trajectories also leave the other type of hyperplanes and converge to $q_k \bm \rho$ for some $k \in \{2,3,\ldots,\mathtt{q}-1\}$, or they converge to one of the extreme equilibrium points.
 This completes the proof.
\end{proof}

\begin{proposition} \label{prop:birkhoffcenter}
      Under Assumptions \ref{ass:2} and \ref{ass:making-A-chosen-strategy-at-1}, the Birkhoff center of the dynamical system induced by the continuous-time imitation population dynamics \eqref{eq:semicontinuous} is 
$
     \bigcup_{k \in [\mathtt{q}]} \{q_k \bm \rho \}. 
     $
\end{proposition}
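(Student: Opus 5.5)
We show two inclusions: every equilibrium $q_k\bm\rho$ is recurrent, and no other point is.

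\emph{First inclusion.} By \Cref{lem_one2one}, each $q_k\bm\rho$, $k\in[\mathtt{q}]$, is an equilibrium of \eqref{eq:semicontinuous}, so $\bm 0\in\bm{\mathcal{V}}(q_k\bm\rho)$. Hence the constant function $\x(t)\equiv q_k\bm\rho$ belongs to $\bm{\mathcal{T}}_{q_k\bm\rho}$. Its $\omega$-limit set is $\{q_k\bm\rho\}$, and $\bm{\mathcal{L}}(q_k\bm\rho)$ is a union over solutions that includes this one. Therefore $q_k\bm\rho\in\bm{\mathcal{L}}(q_k\bm\rho)$, i.e., $q_k\bm\rho\in\bm{\mathcal{R}}_{\bm\Phi}$. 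This gives $\bigcup_k\{q_k\bm\rho\}\subseteq\bm{\mathcal{R}}_{\bm\Phi}$.

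\emph{Reverse inclusion.} Let $\x_0\notin\{q_k\bm\rho\}$. We show $\x_0\notin\bm{\mathcal{L}}(\x_0)$ by showing that every solution from $\x_0$ converges to some $q_k\bm\rho$. Lemmas \ref{lem:converges-semicontinuous-no-boundary}, \ref{lem:covergence_unstable_no_boundary} and \ref{lem:convergence_boundary} together cover all initial conditions: interior points with $\bm 1^\top\x_0$ in a basin $(q^*_{k-1,k},q^*_{k,k+1})$, interior points on a non-attracting level set, and boundary points. In each case every solution from $\x_0$ converges to one of the $q_k\bm\rho$. So for each $\y\in\bm{\mathcal{T}}_{\x_0}$ the set $\bigcap_{t\ge0}\mathrm{cl}(\y[t,\infty))$ is a single point $q_k\bm\rho\neq\x_0$. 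Hence $\bm{\mathcal{L}}(\x_0)\subseteq\bigcup_k\{q_k\bm\rho\}$, which does not contain $\x_0$, and so $\x_0\notin\bm{\mathcal{R}}_{\bm\Phi}$. Consequently $\bm{\mathcal{R}}_{\bm\Phi}=\bigcup_{k}\{q_k\bm\rho\}$. This set is finite, hence closed, so it equals its closure, the Birkhoff center.

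\emph{Main obstacle.} The main obstacle is the claim that \emph{every} solution converges. The set-valued map allows many solutions on the boundary and on the switching hyperplanes $\bm 1^\top\x=q_k$. In particular, \Cref{lem:covergence_unstable_no_boundary} lists three behaviours, but a solution could in principle slide on $\Sigma$ for a while and then leave. The plan for this case is:
\begin{itemize}
\item A solution that stays on $\Sigma$ on an interval satisfies $\dot{\x}=q^*_{k,k+1}\bm\rho-\x$ there.
\item If it leaves $\Sigma$ at some time $t_1$, then $\bm 1^\top\x(t)$ enters one of the adjacent open basins. From then on, the explicit solution in the proof of \Cref{lem:converges-semicontinuous-no-boundary} (restarted at $t_1$) shows it converges to $q^*_k\bm\rho$ or $q^*_{k+1}\bm\rho$ and never returns to $\Sigma$, because $\bm 1^\top\x$ moves monotonically toward the attracting level.
\end{itemize}
Thus each solution either stays on $\Sigma$ forever, converging to $q^*_{k,k+1}\bm\rho$, or converges to a neighbouring attractor. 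Either way its limit is a single equilibrium.

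For boundary starts, the same argument applies once the trajectory leaves $\partial\bm{\X}_s$. A trajectory cannot re-enter the boundary afterwards, since both fields $\bm\rho-\x$ and $-\x$ keep $0<x_i<\rho_i$ for interior points. This uses \Cref{prop_apendix} for the interior reduction to the abstract state.
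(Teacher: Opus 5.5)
Your proposal is correct and follows the same route as the paper: use Lemmas \ref{lem:converges-semicontinuous-no-boundary}--\ref{lem:convergence_boundary} to show that every solution converges to some $q_k\bm\rho$, so the recurrent points are exactly the equilibria, and then note that this finite set is closed and hence equals the Birkhoff center. What you add is detail the paper's two-line proof leaves implicit: the constant solution showing each $q_k\bm\rho$ is recurrent, and the handling of solutions that slide on a non-attracting level and then leave, or that exit the boundary and never return.
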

\begin{proof}
    Given Lemmas \ref{lem:converges-semicontinuous-no-boundary}-\ref{lem:convergence_boundary}, the recurrent points of the continuous-time dynamics are the same as the equilibria.
    With this and the fact that the set of isolated points is closed,  the Birkhoff center is equal to the set of equilibrium points.
    This completes the proof.
\end{proof}
% \setcounter{revisit}{0}
% \begin{revisit}
% The equilibrium points of the continuous-time dynamics are  $\bm q_1 = \bm 0$, $\bm q_2 = 0.376 \bm \rho$, and $\bm q_3 = \bm \rho$.
% % The equilibrium points  $\bm q_2$ and $\bm q_4$ are unstable and the remaining are asymptotically stable.
% The population proportion of cooperators at the initial condition $\x_0 = (14/33, 5/33, 7/33)^\top$
% belongs to $\mathcal{A}(q_3)$, and, consequently, the continuous-time dynamics approach 
%  $\bm q_3$.
% \end{revisit}
\subsection{The asymptotic behavior of the discrete imitation population dynamics}
Having obtained the Birkhoff center, we link the asymptotic behavior of the discrete imitation population dynamics, as population size approaches infinity, with that of the associated continuous-time dynamics.
\begin{theorem} \label{thm:2}
    Consider the discrete imitation population dynamics for a population of size $\N$ \eqref{eq:discretePopulationDynamics}.
    Under Assumptions \ref{ass:2}-\ref{ass:making-A-chosen-strategy-at-1}, for any open set $\bm{\mathcal{O}}$ containing the Birkhoff center 
    and any 
     sequence $\langle \frac{1}{\N} \rangle$ approaching zero, we have $\lim_{ \frac{1}{\N} \to 0} {\mu}^{\frac{1}{\N}}(\bm{\mathcal{O}})=1$, where ${\mu}^{\frac{1}{\N}}$ is an
    invariant probability measure of the Markov chain that models the discrete imitation population dynamics in a population of size $\N$.
\end{theorem}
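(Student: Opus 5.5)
The plan is to invoke the long-run stochastic approximation result of Roth and Sandholm \cite{roth2013stochastic} (their Theorem on invariant measures of GSAPs). That result states: if $\langle \mb{X}^{\epsilon}_k\rangle$ is a family of GSAPs for a good upper semicontinuous differential inclusion on a compact convex state space, then any weak limit point of invariant probability measures $\mu^{\epsilon}$ is supported on the Birkhoff center of the induced set-valued dynamical system. Equivalently, $\mu^{\epsilon}(\bm{\mathcal{O}}) \to 1$ for every open $\bm{\mathcal{O}}$ containing that center. So the proof reduces to checking the hypotheses and plugging in \Cref{prop:birkhoffcenter}.

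First, I would record the setting. The state space $\bm{\mathcal{X}}_{s}=\prod_j[0,\rho_j]$ is compact and convex. The Markov chain in \Cref{prop:markovchain} lives on the finite set $\bm{\mathcal{X}}_{s}\cap\frac{1}{\N}\mathbb{Z}^{\p}$, so for each admissible $\N$ at least one invariant probability measure $\mu^{\frac{1}{\N}}$ exists. I would also check that \eqref{eq:semicontinuous} is good upper semicontinuous and tangent to $\bm{\mathcal{X}}_{s}$:
\begin{itemize}
\item the values are nonempty, convex and bounded, and the graph is closed (\Cref{lem_USC});
\item $\bm\rho-\x$ and $-\x$ point into $\bm{\mathcal{X}}_{s}$ at every boundary point, since $\x+t(\bm\rho-\x)$ and $(1-t)\x$ stay in the box for small $t$, so the convex hull lies in the tangent cone.
\end{itemize}
\Cref{lem_GSAP} then provides the GSAP property along the chosen sequence $\frac{1}{\N}\to0$ with $\N\bm\rho\in\mathbb{Z}^{\p}$.

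Second, I would argue by contradiction to get the stated limit for every invariant measure and every sequence. Suppose $\mu^{\frac{1}{\N}}(\bm{\mathcal{O}})\le 1-\eta$ along a subsequence. By compactness of $\bm{\mathcal{X}}_{s}$ (Prokhorov), a further subsequence converges weakly to some $\mu$. By Roth--Sandholm, $\mu$ is supported on the Birkhoff center $\bigcup_k\{q_k\bm\rho\}$ (\Cref{prop:birkhoffcenter}). The complement $\bm{\mathcal{O}}^c$ is closed, so the Portmanteau theorem gives $\limsup\mu^{\frac{1}{\N}}(\bm{\mathcal{O}}^c)\le\mu(\bm{\mathcal{O}}^c)=0$, which is a contradiction.

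The main obstacle is ensuring that the hypotheses of the invariant-measure theorem in \cite{roth2013stochastic} match exactly. In particular, that theorem (their Theorem 3.5 / Corollary) uses the Birkhoff center of the set-valued flow $\bm\Phi$, which requires the full characterization of recurrent points from \Cref{lem:converges-semicontinuous-no-boundary}--\Cref{lem:convergence_boundary}. I would double-check that non-unique solutions, such as those starting on the hyperplanes through non-attracting $q^*_{k,k+1}$ or on $\partial\bm{\mathcal{X}}_{s}$, create no recurrent points other than the $q_k\bm\rho$. I would also check that the finite equilibrium set is closed, so the Birkhoff center is exactly this finite set; this was already asserted in \Cref{prop:birkhoffcenter}.
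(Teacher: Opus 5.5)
Your proposal is correct and follows the same route as the paper. Both proofs combine the existence of invariant measures on the finite state space, the GSAP property from \Cref{lem_GSAP}, the Birkhoff center from \Cref{prop:birkhoffcenter}, and the Roth--Sandholm invariant-measure result, which the paper invokes through \cite[Theorem 1]{aghaeeyan2024discrete}; your tangent-cone check and Prokhorov/Portmanteau step only write out details that the paper leaves to the cited theorem.
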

\begin{proof}
%\tb{See the arXiv version for a proof \cite{arxiv_imitation}.}
\tb{
 It is straightforward to show that 
  the sequence  $\langle \x^{\N}(k) \rangle$, which  follows the discrete population dynamics \eqref{eq:discretePopulationDynamics},
  is a realization of the Markov chain $\langle \mb{X}^{\frac{1}{\N}}_k \rangle_{k=0}^\infty$ defined in \Cref{prop:markovchain}.
   The Markov chain is homogeneous, and, for each finite $\N$, it is defined over a finite  state space.  Thus, 
 invariant probability measures ${\mu}^{\frac{1}{\N}}$ for Markov chain $\langle\mathbf{X}^{\frac{1}{\N}}_k\rangle_k$ exist.
Based on \Cref{lem_GSAP}, the collection of   Markov chains indexed by the population size
  is a GSAP for  \eqref{eq:semicontinuous}
 whose Birkhoff center is characterized in \Cref{prop:birkhoffcenter}.
 These steps complete the proof 
by defining the sequence $\langle \frac{1}{N}\rangle_{N = \N_0}^{\infty}$ as a vanishing sequence and
\cite[Theorem 1]{aghaeeyan2024discrete}.
}
\end{proof}
What about the fluctuations in the population proportion of $\1-$players?
\begin{cor} \label{cor_fluctuationsDoNotScaleWithN_discretePopulationDynamics_2}
Consider the discrete imitation population dynamics \eqref{eq:discretePopulationDynamics} with a specific initial condition.
     Under the conditions of \Cref{thm:2},
    as the population size approaches infinity,  with probability one,
    the amplitude of the fluctuations in the population proportion of $\1-$players converges to zero.
\end{cor}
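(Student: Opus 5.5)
The plan is to derive the corollary from \Cref{thm:2} and the structure of the Birkhoff center in \Cref{prop:birkhoffcenter}. Fix the initial condition and write $x^\N(k) = \bm 1^\top \x^\N(k)$. The amplitude of the fluctuations is measured by the long-run oscillation of $x^\N(k)$, namely
$\limsup_{k} x^\N(k) - \liminf_{k} x^\N(k)$. We first observe that for each finite $\N$ the Markov chain of \Cref{prop:markovchain} lives on a finite state space. Hence, almost surely, the realization eventually enters a closed communicating (recurrent) class $\bm{\mathcal{C}}^\N$ and visits every state of that class infinitely often. Consequently the almost-sure amplitude equals
$\max_{\y\in\bm{\mathcal{C}}^\N}\bm 1^\top\y-\min_{\y\in\bm{\mathcal{C}}^\N}\bm 1^\top\y$.
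Each recurrent class carries an invariant measure $\mu^{\frac1\N}$ whose support is exactly $\bm{\mathcal{C}}^\N$. It therefore suffices to show that this diameter in the $\bm 1^\top$ direction tends to zero along any sequence of recurrent classes.

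First, take $\bm{\mathcal{O}}_\delta$ to be the union of open $\delta$-balls around the finitely many points $q_k\bm\rho$, with $\delta$ smaller than half the minimal gap between the $q_k$. By \Cref{thm:2}, applied to the sequence of invariant measures supported on the recurrent classes that are actually reached, we get $\mu^{\frac1\N}(\bm{\mathcal{O}}_\delta)\to1$.

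Second, we must upgrade ``most of the mass is near the Birkhoff center'' to ``the whole recurrent class lies within a single $\delta$-ball.'' Because the steps have size $1/\N$, a recurrent class meeting two different balls, or containing points outside $\bm{\mathcal{O}}_\delta$, contains a chain of states crossing the gap region. I would use the finite-horizon result \Cref{prop_finitie_horizon} together with the explicit convergence in Lemmas \ref{lem:converges-semicontinuous-no-boundary}--\ref{lem:convergence_boundary} to argue as follows. From any state outside $\bm{\mathcal{O}}_{\delta}$ the mean dynamics enter $\bm{\mathcal{O}}_{\delta/2}$ within a uniform time $T$, since the rates $\dot x=1-x$ or $-x$ are explicit. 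From a neighbourhood of an attracting $q^*_k\bm\rho$, leaving a $\delta$-ball within time $T$ has vanishing probability. This is the standard argument: invariance of $\mu^{\frac1\N}$ combined with uniform-in-$\x_0$ tracking forces the measure of the annulus, and of the other balls reachable from a given ball, to vanish. By irreducibility of the class, the stationary mass of any state in the class is positive, but it could still be tiny.

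The main obstacle is exactly this gap: \Cref{thm:2} controls mass, not support. A recurrent class could in principle still contain rare excursions. At the non-attracting equilibria $q^*_{k,k+1}$ in particular, the mean dynamics are nonunique, so excursions cannot be excluded by tracking alone. I would handle it directly with the discrete dynamics near each $q_k\bm\rho$. Near an attracting interior equilibrium, the drift $\bm\nu^{\frac1\N}$ always points toward the hyperplane $\bm 1^\top\x=q^*_k$. The chain moves $\bm 1^\top\x$ by $\pm 1/\N$ with the sign dictated by the preferred strategy, so once $x^\N$ is within $O(1/\N)$ of $q^*_k$ it can never leave a window of width $O(1/\N)$ around it. This deterministic trapping argument shows that recurrent classes near attracting points have $\bm 1^\top$-diameter $O(1/\N)$. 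Near non-attracting points the preferred strategy pushes $x^\N$ away from them, so no recurrent class stays there. Combining these facts with the first step, every recurrent class reached has amplitude $O(1/\N)$, almost surely for each $\N$. Hence the amplitude tends to zero with probability one as $\N\to\infty$.
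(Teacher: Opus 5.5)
There is a genuine gap, and it sits in the step that upgrades mass to support.

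\textbf{What is sound.} Your reduction to the $\bm 1^\top$-range of the recurrent class that is reached is correct. You are also right that \Cref{thm:2} only says where $\mu^{\frac{1}{\N}}$ puts its mass. The paper treats the corollary as a direct consequence of \Cref{thm:2} and the finiteness of the Birkhoff center in \Cref{prop:birkhoffcenter}, and gives no support-level argument. So your proof stands or falls with the extra ``deterministic trapping'' step.

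\textbf{Why trapping fails.} The preferred strategy $s(\x^\N)$ depends only on $\bm 1^\top\x^\N$ on $\mathrm{int}(\bm{\X}_s)$. The indicators in \eqref{eq_max_C_utility}--\eqref{eq_max_D_utility} drop from the maxima every type with no $\1$-players (resp.\ no $\2$-players). While $\bm 1^\top\x^\N$ alternates between the window values $m/\N$ and $(m+1)/\N$ next to an attracting $q^*_k$:
\begin{itemize}
\item each up-step adds an $\1$-player to type $i$ with probability $\rho_i-x^\N_i$;
\item each down-step removes one from type $j$ with probability $x^\N_j$.
\end{itemize}
So the type composition random-walks on this slice. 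Let $i^*$ and $j^*$ be the unique $\1$- and $\2$-maximizers at $q^*_k$ from \Cref{ass:2}. For every finite $\N$, the walk reaches with positive (exponentially small) probability a state with $x^\N_{i^*}=0$ and $x^\N_{j^*}<\rho_{j^*}$. This is possible whenever $q^*_k<1-\rho_{i^*}$ and $\N$ is large.

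At that state, the remaining $\1$-utilities are strictly below $u^\2_{j^*}$ on an $\N$-independent neighborhood of $q^*_k$, so $\2$ is preferred at both window points. During the ensuing descent, the set of types counted in the $\2$-maximum can only grow and the set counted in the $\1$-maximum can only shrink. Hence $\bm 1^\top\x^\N$ falls by an amount that does not vanish with $\N$. The symmetric event $x^\N_{j^*}=\rho_{j^*}$ pushes it up instead.

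\textbf{Consequence.} The window is not closed under the chain, so your $O(1/\N)$ bound on the diameter of the recurrent class does not follow. There are two possibilities.
\begin{itemize}
\item The window states are transient. Then you have not identified the class the chain actually enters.
\item The recurrent class containing them also contains these excursions. For instance, $\1$ may become preferred again at a lower crossing, a type-$i^*$ agent switches back, the full maximum is restored, and the chain climbs back to $q^*_k$. Then the excursion states are visited infinitely often almost surely, and the amplitude is bounded below uniformly in $\N$.
\end{itemize}
These excursions carry exponentially small stationary mass, so \Cref{thm:2} cannot exclude them. This is exactly the mass-versus-support problem you flagged, still unresolved.

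\textbf{A second unsupported claim.} You say invariance plus uniform tracking makes the mass of the ``other balls'' vanish. This is not a consequence of \Cref{thm:2} or \Cref{prop_finitie_horizon}: limits of invariant measures may split their mass among several equilibria $q_k\bm\rho$.

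\textbf{What would close it.} You need a new ingredient that controls boundary-induced changes of the maximizing types, for example showing that every such excursion ends in absorption at $\bm 0$ or $\bm\rho$. Without it, the stochastic-approximation route only delivers a mass-level statement. Applying \Cref{thm:2} to the invariant measure of the reached class, together with the ergodic theorem, the long-run fraction of time that $\x^\N(k)$ spends outside any fixed neighborhood of $\{q_k\bm\rho\}$ tends to zero.
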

% \begin{proof}
%     The proof is similar to that of \cite[Corollary 4]{aghaeeyan2023discrete} and is omitted.
% \end{proof}
\setcounter{revisit}{0}
\begin{revisit}
The maximum and minimum of the population proportion of $\1-$players in \Cref{fig:finitepop1} are indeed getting closer to \tb{$q_3 = 0.509$} for larger population sizes.
\end{revisit}
\begin{remark} \label{rem_tie}
    \tb{
       According to the current model, when both strategies yield the same utilities, the active agent chooses strategy $\1$.
    As shown in \Cref{popA1}, \Cref{lem_GSAP}  (which connects the discrete population dynamics to the mean dynamics) remains valid for some other tie-breaking rules, including favoring strategy $\2$ or choosing between the two strategies uniformly at random. Thus, the results of the paper remain valid under these tie-breaking rules.
    }
\end{remark}

 \section{Concluding Remarks}
We studied the behavior of a heterogeneous population of individuals imitating the highest earners when the population size approaches infinity.
\tb{This setting captures success- or payoff-biased social learning, where agents rely on others' observed success rather than complete payoff information to guide their choices.}
Using the available results in the stochastic approximation theory and through formulating and analyzing the asymptotic behavior of the mean dynamics, it was shown that the amplitudes of the reported perpetual fluctuations in the population proportions of $\1$-players converge to zero with probability one. 
Whether the perpetual fluctuations also diminish in games with more than two available strategies remains a subject for future investigation.
\bibliographystyle{IEEEtran}
\bibliography{IEEEabrv,ref}

\begin{thebibliography}{10}
\providecommand{\url}[1]{#1}
\csname url@rmstyle\endcsname
\providecommand{\newblock}{\relax}
\providecommand{\bibinfo}[2]{#2}
\providecommand\BIBentrySTDinterwordspacing{\spaceskip=0pt\relax}
\providecommand\BIBentryALTinterwordstretchfactor{4}
\providecommand\BIBentryALTinterwordspacing{\spaceskip=\fontdimen2\font plus
\BIBentryALTinterwordstretchfactor\fontdimen3\font minus \fontdimen4\font\relax}
\providecommand\BIBforeignlanguage[2]{{%
\expandafter\ifx\csname l@#1\endcsname\relax
\typeout{** WARNING: IEEEtran.bst: No hyphenation pattern has been}%
\typeout{** loaded for the language `#1'. Using the pattern for}%
\typeout{** the default language instead.}%
\else
\language=\csname l@#1\endcsname
\fi
#2}}

\bibitem{aghaeeyan2024discreteCDC}
A.~Aghaeeyan and P.~Ramazi, ``From discrete to continuous imitation dynamics,'' in \emph{2024 IEEE 63rd Conference on Decision and Control (CDC)}.\hskip 1em plus 0.5em minus 0.4em\relax IEEE, 2024, pp. 966--971.

\bibitem{tanny1988innovators}
S.~M. Tanny and N.~A. Derzko, ``Innovators and imitators in innovation diffusion modelling,'' \emph{Journal of Forecasting}, vol.~7, no.~4, pp. 225--234, 1988.

\bibitem{van2007new}
C.~Van~den Bulte and Y.~V. Joshi, ``New product diffusion with influentials and imitators,'' \emph{Marketing science}, vol.~26, no.~3, pp. 400--421, 2007.

\bibitem{bauch2005imitation}
C.~T. Bauch, ``Imitation dynamics predict vaccinating behaviour,'' \emph{Proceedings of the Royal Society B: Biological Sciences}, vol. 272, no. 1573, pp. 1669--1675, 2005.

\bibitem{10453658}
L.~Arditti, G.~Como, F.~Fagnani, and M.~Vanelli, ``Robust coordination of linear threshold dynamics on directed weighted networks,'' \emph{IEEE Transactions on Automatic Control}, vol.~69, no.~10, pp. 6515--6529, 2024.

\bibitem{10172285}
G.~Chen and Y.~Yu, ``Convergence analysis and strategy control of evolutionary games with imitation rule on toroidal grid,'' \emph{IEEE Transactions on Automatic Control}, vol.~68, no.~12, pp. 8185--8192, 2023.

\bibitem{ramazi2020convergence}
P.~Ramazi and M.~Cao, ``Convergence of linear threshold decision-making dynamics in finite heterogeneous populations,'' \emph{Automatica}, vol. 119, p. 109063, 2020.

\bibitem{ramazi2017asynchronous}
------, ``Asynchronous decision-making dynamics under best-response update rule in finite heterogeneous populations,'' \emph{IEEE Transactions on Automatic Control}, vol.~63, no.~3, pp. 742--751, 2017.

\bibitem{grabisch2020anti}
M.~Grabisch and F.~Li, ``Anti-conformism in the threshold model of collective behavior,'' \emph{Dynamic Games and Applications}, vol.~10, no.~2, pp. 444--477, 2020.

\bibitem{roohi}
P.~Ramazi and M.~H. Roohi, ``Characterizing oscillations in heterogeneous populations of coordinators and anticoordinators,'' \emph{Automatica}, vol. 154, p. 111068, 2023.

\bibitem{FU2024111354}
Y.~Fu and P.~Ramazi, ``Evolutionary matrix-game dynamics under imitation in heterogeneous populations,'' \emph{Automatica}, vol. 159, p. 111354, 2024.

\bibitem{sandholm2010population}
W.~H. Sandholm, \emph{Population games and evolutionary dynamics}.\hskip 1em plus 0.5em minus 0.4em\relax MIT press, 2010.

\bibitem{golman2010basins}
R.~Golman and S.~E. Page, ``Basins of attraction and equilibrium selection under different learning rules,'' \emph{Journal of evolutionary economics}, vol.~20, no.~1, pp. 49--72, 2010.

\bibitem{bestresponsePotential}
B.~Swenson, R.~Murray, and S.~Kar, ``On best-response dynamics in potential games,'' \emph{SIAM Journal on Control and Optimization}, vol.~56, no.~4, pp. 2734--2767, 2018.

\bibitem{berger2007two}
U.~Berger, ``Two more classes of games with the continuous-time fictitious play property,'' \emph{Games and Economic Behavior}, vol.~60, no.~2, pp. 247--261, 2007.

\bibitem{theodorakopoulos2012selfish}
G.~Theodorakopoulos, J.-Y. Le~Boudec, and J.~S. Baras, ``Selfish response to epidemic propagation,'' \emph{IEEE Transactions on Automatic Control}, vol.~58, no.~2, pp. 363--376, 2012.

\bibitem{cianfanelli2025stability}
L.~Cianfanelli and G.~Como, ``On the stability of the logit dynamics in population games,'' \emph{IEEE Transactions on Automatic Control}, 2025.

\bibitem{comoImitation}
G.~Como, F.~Fagnani, and L.~Zino, ``Imitation dynamics in population games on community networks,'' \emph{IEEE Transactions on Control of Network Systems}, vol.~8, no.~1, pp. 65--76, 2021.

\bibitem{replicator}
D.~Madeo and C.~Mocenni, ``Game interactions and dynamics on networked populations,'' \emph{IEEE Transactions on Automatic Control}, vol.~60, no.~7, pp. 1801--1810, 2015.

\bibitem{replicator2}
M.~A. Mabrok, ``Passivity analysis of replicator dynamics and its variations,'' \emph{IEEE Transactions on Automatic Control}, vol.~66, no.~8, pp. 3879--3884, 2021.

\bibitem{nogales2020replicator}
J.~M.~S. Nogales and S.~Zazo, ``Replicator based on imitation for finite and arbitrary networked communities,'' \emph{Applied Mathematics and Computation}, vol. 378, p. 125166, 2020.

\bibitem{10842053}
S.~Wang, M.~Cao, and X.~Chen, ``Optimally combined incentive for cooperation among interacting agents in population games,'' \emph{IEEE Transactions on Automatic Control}, vol.~70, no.~7, pp. 4562--4577, 2025.

\bibitem{benaim2005stochastic}
M.~Bena{\"\i}m, J.~Hofbauer, and S.~Sorin, ``Stochastic approximations and differential inclusions,'' \emph{SIAM Journal on Control and Optimization}, vol.~44, no.~1, pp. 328--348, 2005.

\bibitem{benaim1998recursive}
M.~Bena{\"\i}m, ``Recursive algorithms, urn processes and chaining number of chain recurrent sets,'' \emph{Ergodic Theory and Dynamical Systems}, vol.~18, no.~1, pp. 53--87, 1998.

\bibitem{benaim2003deterministic}
M.~Bena{\"\i}m and J.~W. Weibull, ``Deterministic approximation of stochastic evolution in games,'' \emph{Econometrica}, vol.~71, no.~3, pp. 873--903, 2003.

\bibitem{benaim2009mean}
M.~Bena{\"\i}m and J.~Weibull, ``Mean-field approximation of stochastic population processes in games,'' \emph{hal-00435515}, 2009.

\bibitem{roth2013stochastic}
G.~Roth and W.~H. Sandholm, ``Stochastic approximations with constant step size and differential inclusions,'' \emph{SIAM Journal on Control and Optimization}, vol.~51, no.~1, pp. 525--555, 2013.

\bibitem{aghaeeyan2023discrete}
A.~Aghaeeyan and P.~Ramazi, ``From discrete to continuous binary best-response dynamics: Discrete fluctuations almost surely vanish with population size,'' \emph{arXiv preprint arXiv:2311.01995v2}, 2024.

\bibitem{ramazi2022lower}
P.~Ramazi, J.~Riehl, and M.~Cao, ``The lower convergence tendency of imitators compared to best responders,'' \emph{Automatica}, vol. 139, p. 110185, 2022.

\bibitem{nowakMay1992}
M.~A. Nowak and R.~M. May, ``Evolutionary games and spatial chaos,'' \emph{Nature}, vol. 359, no. 6398, pp. 826--829, 1992.

\bibitem{henrich2001evolution}
J.~Henrich and F.~J. Gil-White, ``The evolution of prestige: Freely conferred deference as a mechanism for enhancing the benefits of cultural transmission,'' \emph{Evolution and human behavior}, vol.~22, no.~3, pp. 165--196, 2001.

\bibitem{BarrettEtAl2017}
B.~J. Barrett, R.~L. McElreath, and S.~E. Perry, ``Pay-off-biased social learning underlies the diffusion of novel extractive foraging traditions in a wild primate,'' \emph{Proceedings of the Royal Society B: Biological Sciences}, vol. 284, no. 1856, p. 20170358, 2017.

\bibitem{cortes2008discontinuous}
J.~Cortes, ``Discontinuous dynamical systems,'' \emph{IEEE Control systems magazine}, vol.~28, no.~3, pp. 36--73, 2008.

\bibitem{MAYHEW20111045}
C.~G. Mayhew and A.~R. Teel, ``On the topological structure of attraction basins for differential inclusions,'' \emph{Systems \& Control Letters}, vol.~60, no.~12, pp. 1045--1050, 2011.

\bibitem{filippov}
A.~Filippov, \emph{Differential Equations with Discontinuous Righthand Sides}.\hskip 1em plus 0.5em minus 0.4em\relax Kluwer Academic Publishers, 1988.

\bibitem{dieci2009sliding}
L.~Dieci and L.~Lopez, ``Sliding motion in filippov differential systems: theoretical results and a computational approach,'' \emph{SIAM Journal on Numerical Analysis}, vol.~47, no.~3, pp. 2023--2051, 2009.

\bibitem{dieci2011sliding}
------, ``Sliding motion on discontinuity surfaces of high co-dimension. a construction for selecting a filippov vector field,'' \emph{Numerische Mathematik}, vol. 117, pp. 779--811, 2011.

\bibitem{aghaeeyan2024discrete}
A.~Aghaeeyan and P.~Ramazi, ``From discrete to continuous best-response dynamics: Discrete fluctuations do not scale with population size,'' in \emph{2024 American Control Conference (ACC)}.\hskip 1em plus 0.5em minus 0.4em\relax IEEE, 2024, pp. 851--856.

\end{thebibliography}
\appendix
\begin{appendixLemma} \label{lem_USC}
The continuous-time population dynamics \eqref{eq:semicontinuous} 
are good upper semicontinuous.
\end{appendixLemma}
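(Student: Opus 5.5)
We have to verify four properties of $\bm{\mathcal{V}}$ on the compact convex set $\bm{\mathcal{X}}_{s}$: nonemptiness, convexity, boundedness, and closedness of the graph. We also have to check the tangency requirement $\bm{\mathcal{V}}(\x)\subseteq \mathrm{T}_{\bm{\mathcal{X}}_{s}}(\x)$ that the GSAP definition presupposes.

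The first three properties follow directly from \eqref{eq:semicontinuous}. Each value is either a singleton $\{\bm\rho-\x\}$, a singleton $\{-\x\}$, or the segment $\mathrm{Conv}(\bm\rho-\x,-\x)$, so it is nonempty and convex. Since $\x\in\prod_j[0,\rho_j]$, every element has the form $\lambda\bm\rho-\x$ with $\lambda\in[0,1]$, and its $1$-norm is at most $2\,\bm 1^\top\bm\rho=2$. For tangency, note that $\lambda\bm\rho\in\bm{\mathcal{X}}_{s}$ for every $\lambda\in[0,1]$. Hence $\lambda\bm\rho-\x=1\cdot(\lambda\bm\rho-\x)$ has the form $\alpha(\y-\x)$ with $\y=\lambda\bm\rho$ and $\alpha=1$, so it lies in the tangent cone.

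The main work is closedness of the graph. Take $(\x_n,\bm v_n)\to(\x,\bm v)$ with $\bm v_n\in\bm{\mathcal{V}}(\x_n)$; we must show $\bm v\in\bm{\mathcal{V}}(\x)$. Write $\bm v_n=\lambda_n\bm\rho-\x_n$ with $\lambda_n\in[0,1]$ and pass to a subsequence with $\lambda_n\to\lambda$, so that $\bm v=\lambda\bm\rho-\x$. If $\x$ falls in the third case of \eqref{eq:semicontinuous}, then $\bm{\mathcal{V}}(\x)$ is the whole segment and we are done. It remains to treat $\x\in\mathrm{int}(\bm{\mathcal{X}}_{s})$ with $u^\1(\x)>u^\2(\x)$ (the reverse inequality is symmetric), where we must show $\lambda=1$.

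For this case, first note that on the interior the indicators in \eqref{eq_max_C_utility}–\eqref{eq_max_D_utility} all equal one. Thus $u^\1(\x)=\max_i u^\1_i(\bm 1^\top\x)$ and $u^\2(\x)=\max_i u^\2_i(\bm 1^\top\x)$, and both are continuous in $\x$ as maxima of polynomials composed with a linear map. Since the interior is open, there is a neighborhood of $\x$ contained in $\mathrm{int}(\bm{\mathcal{X}}_{s})$. By continuity of $u^\1-u^\2$ there, the strict inequality $u^\1>u^\2$ persists on a possibly smaller neighborhood. Hence for large $n$ the point $\x_n$ lies in the first case, so $\lambda_n=1$, and therefore $\lambda=1$.

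The step I expect to require the most care is exactly this argument near the boundary. The maximal utilities are discontinuous across $\partial\bm{\mathcal{X}}_{s}$ because of the indicator $\mt{1}(\cdot)$ taking the value $-\infty$. The proof must use that $\x$ is interior, so the relevant neighborhood avoids the boundary, and must not rely on continuity of $u^\1,u^\2$ on all of $\bm{\mathcal{X}}_{s}$. When the limit point $\x$ is itself on the boundary, it is automatically in the convex-hull case, which is precisely why the definition assigns the full segment there.
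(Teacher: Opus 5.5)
Your proof is correct, but you reach closedness of the graph by a different route than the paper.

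The paper writes the graph as a finite union of closed sets. Over each slab $\{\x\in\bm{\mathcal{X}}_s : \bm 1^\top\x\in[q_k,q_{k+1}]\}$ it takes the graph of one continuous selection, $-\x$ or $\bm\rho-\x$, chosen according to whether the endpoint equilibrium is attracting. It then adds pieces carrying the full segment $\lambda\bm\rho-\x$, $\lambda\in[0,1]$. That bookkeeping needs finitely many ordered equilibria $q_k$ and needs the sign of $\max_i u^{\1}_i-\max_i u^{\2}_i$ to flip at each of them. So it tacitly uses Assumptions~\ref{ass:2} and~\ref{ass:making-A-chosen-strategy-at-1}.

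Your sequential argument uses only that the maximal utilities are continuous on the open set $\mathrm{int}(\bm{\mathcal{X}}_s)$. Hence the two single-valued regions are open, and every other point receives the whole segment. Your argument therefore holds for arbitrary continuous utilities and without the assumptions. That matches how the lemma is invoked in Lemma~\ref{lem_GSAP} and Proposition~\ref{prop_finitie_horizon}, neither of which assumes them.

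Your route also avoids a slip in the paper's decomposition. As written, $\bm{\mathcal{G}}_1$ is restricted to $\x\in\partial\bm{\mathcal{X}}_s$, so the union misses $(\x,\lambda\bm\rho-\x)$ with $\lambda\in(0,1)$ at interior points of the hyperplanes $\bm 1^\top\x=q_k$. Dropping that restriction fixes it.

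In return, the paper's version gives an explicit description of the graph that lines up with the equilibrium structure used later. You also verify the tangent-cone condition $\bm{\mathcal{V}}(\x)\subseteq\mathrm{T}_{\bm{\mathcal{X}}_s}(\x)$, which the paper leaves implicit.
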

\begin{proof}
For each $\x \in \bm{\X}_{s}$, $\bm{\V}(\x)$ is nonempty and bounded.
As for convexity of $\bm{\V}(\x)$, if  $\bm{\V}(\x)$ is a singleton, it is obviously convex; otherwise  $\bm{\V}(\x)$ will read as $\lambda \bm \rho -\x$ for $\lambda \in [0,1]$.
In this case, to prove the convexity, we need to show that if
$\y_1, \y_2 \in \bm{\V}(\x)$ then $\alpha \y_1 + (1-\alpha) \y_2 \in  \bm{\V}(\x)$, for $\alpha \in [0,1]$.
By definition,
$\y_1\in \bm{\V}(\x)$ 
(resp. $\y_2\in \bm{\V}(\x)$)
means $\y_1 = \lambda_1 \bm \rho - \x$
(resp. $\y_2 = \lambda_2 \bm \rho - \x$),
 for some $\lambda_1 \in [0,1]$ (resp.  $\lambda_2 \in [0,1]$).
Substituting $\lambda_1 \bm \rho - \x$ (resp. $\lambda_2 \bm \rho - \x$) for $\y_1$ (resp. $\y_2$) in $\alpha \y_1 + (1-\alpha) \y_2$
results in $(\lambda_1 \alpha + \lambda_2 (1-\alpha)) \bm \rho - \x$, and in view of 
 $(\lambda_1 \alpha + \lambda_2 (1-\alpha)) \in [\min(\lambda_1, \lambda_2), \max(\lambda_1, \lambda_2)]$, we conclude that 
 $\lambda_1 \alpha + \lambda_2 (1-\alpha) \in [0,1]$ and, in turn,
 the term
 $\alpha \y_1 + (1-\alpha) \y_2$ also
 belongs to $\bm{\V}(\x)$. 
 Thus, $\bm{\V}(\x)$ is convex for each $\x \in \bm{\mathcal{X}}_s.$
Now we show that the graph of $\bm {\V}(\x)$ is closed.
    Let the set 
    ${ \mathcal{Q}}_a$ (resp. ${\mathcal{Q}}_n$) consist of the attracting  (resp. non-attracting) equilibrium points of the abstract dynamics \eqref{eq:abstract}.
    Consider the set $\bm {\mathcal{G}}^{+}_k$ for $k=1,2,\ldots, \mathtt{q}-1$
    \begin{align*}
    \bm {\mathcal{G}}^{+}_k:=\{(\x, \y) \mid  &\x \in \bm{\X}_{s}, \bm 1^\top \x \in [q_{k}, q_{k+1}], \y =  - \x 
    \\
    &\text{ if } q_{k}\in { \mathcal{Q}}_a  \text{ else } \y = \bm \rho - \x \},
    \end{align*}
    the set $\bm {\mathcal{G}}^{-}_k$ for $k=2,\ldots, \mathtt{q}$ 
    \begin{align*}
       \bm {\mathcal{G}}^{-}_k:=\{(\x, \y) \mid &  \x \in \bm{\X}_{s},\bm 1^\top \x \in [q_{k-1}, q_{k}], \y =\bm \rho  - \x\\
       &\text{ if } q_{k}\in { \mathcal{Q}}_a  \text{ else } \y =  - \x \},
    \end{align*}
    \begin{align*}
     \bm {\mathcal{G}}_1:=\big\{(\x,\y) \mid \x &\in \partial\bm{\X}_{s}, \bm 1^\top \x \in \{q_1,q_2,\ldots,q_{\mathtt{q}}\}, \\ &\y =\lambda \bm \rho -\x, \text{ $\lambda \in [0,1]$} \big\}, \qquad \text{and}
     \end{align*}
    $$\bm {\mathcal{G}}:=\{(\x,\y) \mid \x \in \partial\bm{\X}_{s}, \y =\lambda \bm \rho -\x, \text{ $\lambda \in [0,1]$}\}.$$
    The sets $\bm {\mathcal{G}}^{-}_k$ and $\bm {\mathcal{G}}^{+}_k$
    are the graphs of continuous functions defined over closed subsets of  $\bm{\X}_{s}$, and, consequently, they are closed.
    The closed-ness of  the sets $\bm {\mathcal{G}}_1$ and $\bm {\mathcal{G}}$ is immediate.
    The union of these sets is the graph of $\bm {\V}(\x)$ which is then closed.
    This completes the proof.
\end{proof}
\tb{A Caratheodory solution for a differential inclusion is any absolutely continuous function that satisfies the differential inclusion almost everywhere \cite{cortes2008discontinuous}.}
\begin{appendixLemma}\label{prop_apendix}
\tb{Let $\x (\cdot)$ be any Caratheodory solution for the continuous-time imitation population dynamics with initial condition $\x_0\in\mathrm{int}(\bm{\mathcal{X}}_s)$, and denote $\mathbf{1}^\top \x(t)$ by $z(t)$.
Then, for every finite horizon $T$,
%\begin{enumerate}[label=(\roman*),leftmargin=2.2em,itemsep=0.25em]
%\item $\x(t)\in \bm{\mathcal{X}}_s$ for all $t\in[0,T]$ (hence all indicator functions in~(7)--(9) equal~$1$ on $[0,T]$); and
 $z(\cdot)$ is a solution for the abstract differential inclusion (12) on $[0, T]$ with initial condition $x(0)=\mathbf{1}^\top \x_0$.
%\end{enumerate}
}
\end{appendixLemma}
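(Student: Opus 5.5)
The plan is to exploit that every value of the set-valued map in \eqref{eq:semicontinuous} has the form $\lambda\bm\rho-\x$ with $\lambda\in[0,1]$. Here $\lambda=1$ in the first case, $\lambda=0$ in the second, and $\lambda\in[0,1]$ in the convex-hull case. I would first show that a Caratheodory solution starting in $\mathrm{int}(\bm{\mathcal{X}}_s)$ never reaches $\partial\bm{\mathcal{X}}_s$ in finite time. Once that is known, all indicator functions in \eqref{eq_max_C_utility}--\eqref{eq_max_D_utility} equal one, so the case distinction in \eqref{eq:semicontinuous} depends only on $z=\bm 1^\top\x$. Projecting onto $\bm 1$ then yields \eqref{eq:abstract}.

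\emph{Step 1 (invariance of the interior).} Since $\x(\cdot)$ is absolutely continuous and $\dot{\x}(t)\in\bm{\mathcal{V}}(\x(t))$ for almost every $t$, for almost every $t$ there is some $\lambda(t)\in[0,1]$ with $\dot x_i(t)=\lambda(t)\rho_i-x_i(t)$ for all $i\in[\p]$. I avoid any measurability issue for $\lambda$ by using only the inequalities $\dot x_i\ge -x_i$ and $\frac{d}{dt}(\rho_i-x_i)=(1-\lambda)\rho_i-(\rho_i-x_i)\ge-(\rho_i-x_i)$, which hold a.e. Applying a Gr\"onwall-type comparison to the absolutely continuous functions $e^{t}x_i(t)$ and $e^{t}(\rho_i-x_i(t))$, which are nondecreasing, gives
$x_i(t)\ge e^{-t}x_{0,i}>0$ and $\rho_i-x_i(t)\ge e^{-t}(\rho_i-x_{0,i})>0$ for all $t\in[0,T]$. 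Hence $\x(t)\in\mathrm{int}(\bm{\mathcal{X}}_s)$ on $[0,T]$.

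\emph{Step 2 (reduction to the abstract inclusion).} On $[0,T]$ all indicators are one, so $u^\1(\x(t))=\max_{i}u^\1_i(z(t))$ and $u^\2(\x(t))=\max_i u^\2_i(z(t))$. The function $z=\bm 1^\top\x$ is absolutely continuous with $z(0)=\bm 1^\top\x_0$. For almost every $t$ we have $\dot z(t)=\bm 1^\top\dot{\x}(t)=\lambda(t)\,\bm 1^\top\bm\rho-z(t)=\lambda(t)-z(t)$, using $\bm 1^\top\bm\rho=1$. I then check the three cases. If $\max_i u^\1_i(z)>\max_i u^\2_i(z)$, then, since $\x$ is interior, $\bm{\mathcal{V}}(\x)=\{\bm\rho-\x\}$, so $\lambda=1$ and $\dot z=1-z$. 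The reverse strict inequality gives $\lambda=0$ and $\dot z=-z$. Under equality, $\lambda\in[0,1]$ gives $\dot z\in[-z,1-z]$. In every case $\dot z(t)\in\mathcal{X}(z(t))$ a.e., so $z$ is a Caratheodory solution of \eqref{eq:abstract}.

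The main obstacle is Step 1. Without interior invariance the indicators could switch off, and the convex-hull case on $\partial\bm{\mathcal{X}}_s$ would no longer be governed by $z$ alone. The exponential lower bounds settle this cleanly, but only because the bounds hold uniformly over the unknown selection $\lambda(t)$; this is why I would phrase them as differential inequalities rather than solve an explicit ODE.
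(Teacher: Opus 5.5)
Your proposal is correct and follows essentially the same route as the paper. Both write $\dot{\x}=\lambda(t)\bm\rho-\x$ with $\lambda(t)\in[0,1]$, show via exponential bounds that the solution stays in $\mathrm{int}(\bm{\mathcal{X}}_s)$ on $[0,T]$, and then project onto $\bm 1$ to obtain $\dot z\in\mathcal{X}(z)$ almost everywhere. The paper gets the bounds from the variation-of-constants formula, which is equivalent to your comparison argument; the measurability worry is moot because $\lambda(t)=\bm 1^\top(\dot{\x}(t)+\x(t))$. In the tie case your statement $\dot z\in[-z,1-z]$ is actually more accurate than the paper's written version, which has $[-1,1]$.
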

\begin{proof}
\tb{
Because $\x_0\in\mathrm{int}(\bm{\mathcal{X}}_s)$, we have $0<x_{i,0}<\rho_i$ for all $i\in [\p]$, where $x_{i,0}$ denotes the $i^\text{th}$ element of $\x_0.$
%By definition of Caratheodory solutions, $\x(\cdot)$ is absolutely continuous and satisfies $\dot{\x}(t)\in V(\mathbf{x}(t))$ for almost all $t$, where the set-valued drift $V(\cdot)$ is given in~(10).
%We consider any time $t$ at which $\dot{\mathbf{x}}(t)$ exists and belongs to $V(\mathbf{x}(t))$.
The three cases in (11) can be summarized as
%$\V(\x)\subseteq \mathrm{conv}\{\boldsymbol{\rho}-\x,\,-\mathbf{x}\}$ for every $\mathbf{x}\in\mathcal{X}^s$.
%Therefore, for almost all $t$ there exists a scalar $\lambda(t)\in[0,1]$ such that
\begin{equation}\label{eq:lambda_form}
\dot{\x}(t)=\lambda(t)\bm \rho-\x(t).
\end{equation}
Here $\lambda(t)$ is scalar-valued and bounded on the interval $[0,1]$.
Fix any finite $T<\infty$.
Solving \eqref{eq:lambda_form} gives, for each $i \in [\p]$ and all $t\in[0,T]$,
$
x_i(t)=e^{-t}x_{i,0}+\int_{0}^{t}e^{-(t-\tau)}\lambda(\tau)\rho_i\,d\tau.
$
Since $x_{i,0}>0$, $\rho_i>0$, and $\lambda(\tau)\ge 0$, for all $t\in[0,T]$, we have  $x_i(t)>0$ .
Moreover, $x_{i,0}<\rho_i$ and $\lambda(\tau)\le 1$ results in
$
x_i(t)\le e^{-t}\rho_i+\int_{0}^{t}e^{-(t-\tau)}\rho_i\,d\tau
= e^{-t}\rho_i+\rho_i(1-e^{-t})=\rho_i
$
for all $t\in[0,T]$, and strict inequality holds for all finite $t$ because $x_{i,0}<\rho_i$ and $\lambda(\cdot)$ is bounded above by 1.
Thus, $0<x_i(t)<\rho_i$ for all $t\in[0,T]$ and all $i$, i.e., $\x(t)\in\mathrm{int}(\bm{\mathcal{X}}_s)$ on $[0,T]$.
This implies that the conditions in (11), for $t \in [0,T]$, will only be dependent on $\bm 1^\top \x$ and, accordingly, the conditions
     $\x \notin \partial \bm{\mathcal{X}}_{s} \text{ and } u^\1(\x) \substack{> \\ <}   u^\2(\x)$ will reduce to $u^\1(\bm 1^\top \x) \substack{> \\ <}   u^\2(\bm 1^\top  \x)
     $ or, equivalently, to
$\displaystyle\max_{i\in[\p]} u^\1_i(\bm 1^\top \x) 
     \substack{> \\ <}
     \displaystyle\max_{i\in[\p]} u^\2_i(\bm 1^\top \x)
$.
Now, define $z(t)= \bm 1^\top \x(t)$.
At $\x$, where 
$ \displaystyle\max_{i\in[\p]} u^\1_i(\bm 1^\top \x) 
     >
     \displaystyle\max_{i\in[\p]} u^\2_i(\bm 1^\top \x),$ we have
     $\dot{\x} = \bm \rho - \x$, and hence, $\dot{z}(t) = 1 - z$. 
     The condition
     $ \displaystyle\max_{i\in[\p]} u^\1_i(\bm 1^\top \x) 
     <
     \displaystyle\max_{i\in[\p]} u^\2_i(\bm 1^\top \x),$
implies 
     $\dot{\x} =  - \x$, and hence, $\dot{z}(t) = - z$, and finally,  $ \displaystyle\max_{i\in[\p]} u^\1_i(\bm 1^\top \x) 
     =
     \displaystyle\max_{i\in[\p]} u^\2_i(\bm 1^\top \x),$ results in $\dot{\x} \in \mathrm{Conv}(\bm \rho - \x)$ and, in turn,
     $\dot{z}(t) \in [-1,1] $.
     Thus, in all three cases, the set of trajectories of $z(\cdot)$ is the same as that of the abstract state.
     }
\end{proof}
\begin{propAppendix}
\label{popA1}
\Cref{lem_GSAP} remains true under the following tie-breaking rules:
(\emph{i}) The active agent 
chooses her strategy uniformly at random,
(\emph{ii}) The active agent chooses strategy $\2$.
\end{propAppendix}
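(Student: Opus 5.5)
The plan is to rerun the proof of \Cref{lem_GSAP} with the only model-dependent ingredient, the switching function $s(\cdot)$, replaced by the new tie-breaking rules. Condition 1 of \Cref{defGSAP} and the good upper semicontinuity of \eqref{eq:semicontinuous} (\Cref{lem_USC}) concern only the mean dynamics and the state space, so they carry over unchanged.

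First write the Markov chain under each rule. Let $\sigma(\x^\N)\in[0,1]$ be the probability that the active agent adopts $\1$.
\begin{itemize}
\item $\sigma=1$ if $u^\1(\x^\N)>u^\2(\x^\N)$ and $\sigma=0$ if $u^\2(\x^\N)>u^\1(\x^\N)$.
\item At a tie, $\sigma=1/2$ under rule (\emph{i}) and $\sigma=0$ under rule (\emph{ii}).
\end{itemize}
Repeating the computation in the proof of \Cref{lem_GSAP}, an increase in type $p$ has probability $(\rho_p-x^\N_p)\sigma(\x^\N)$ and a decrease has probability $x^\N_p(1-\sigma(\x^\N))$. The expected increment per unit time is therefore
\[
\bm\nu^{\frac{1}{\N}}(\x^\N)=\sigma(\x^\N)\bm\rho-\x^\N .
\]

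Next verify conditions 2 and 4 exactly as before. Define $\mb{U}^{\frac{1}{\N}}_{k+1}$ as $\N$ times the centered increment. This is a martingale difference sequence, and it is uniformly bounded because increments have size $1/\N$ and $\vert\bm\nu^{\frac1\N}\vert\le \bm 1^\top\bm\rho+1$. Hence \cite[Proposition 2.3]{roth2013stochastic} gives condition 4 verbatim.

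The remaining step is condition 3, which I expect to be the main obstacle. I would show that $\bm\nu^{\frac1\N}(\x)\in\bm\V(\x)$ for every lattice point $\x$, so that $\y=\x$ works in \eqref{eq_3cnd_GSAP}. Three cases cover all states.
\begin{itemize}
\item If $\x\in\partial\bm{\X}_s$, then $\bm\V(\x)=\mathrm{Conv}(\bm\rho-\x,-\x)=\{\lambda\bm\rho-\x:\lambda\in[0,1]\}$. This contains $\sigma\bm\rho-\x$ for any $\sigma\in[0,1]$.
\item If $\x$ is interior and not a tie, $\sigma\in\{0,1\}$ agrees with the strict inequality, so $\bm\nu$ equals $\bm\rho-\x$ or $-\x$ as in \eqref{eq:semicontinuous}.
\item If $\x$ is interior and a tie, then $u^\1(\x)=u^\2(\x)$ with all indicators equal to one. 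This is exactly the condition $\max_i u^\1_i(\bm 1^\top\x)=\max_i u^\2_i(\bm 1^\top\x)$, which falls into the ``otherwise'' branch of \eqref{eq:semicontinuous}. That branch is the convex hull, so $\sigma=1/2$ and $\sigma=0$ both give selections.
\end{itemize}

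The delicate point is the identification in the third case. On the interior, the indicators in \eqref{eq_max_C_utility}--\eqref{eq_max_D_utility} are all one, so the ties of the discrete rule coincide precisely with the convex-hull set of the mean dynamics. Ties on the boundary are already covered by the first case.

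Since the mean dynamics \eqref{eq:semicontinuous} do not depend on the tie-breaking rule, all four conditions hold for the same differential inclusion. This proves the claim.
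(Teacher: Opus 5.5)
Your proposal is correct and follows essentially the same route as the paper. Both arguments recompute the drift as $\sigma(\x^\N)\bm\rho-\x^\N$ (the paper encodes the tie as $s=\tfrac{3}{2}$, i.e.\ $\sigma=2-s$), reuse conditions 1, 2 and 4 from \Cref{lem_GSAP}, and verify condition 3 with $\y=\x$ by checking that the drift is a selection of $\bm{\V}(\x)$. Your explicit split into boundary, interior non-tie and interior tie cases, and your explicit treatment of rule (\emph{ii}), are in fact slightly more careful than the paper, which only asserts that the drift lies in $\mathrm{Conv}(\bm\rho-\x,-\x)$.
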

\begin{proof}
\tb{
When a tie happens, both strategies yield the same utilities, i.e., $u^\1(\x^\N) = u^\2(\x^\N).$
We prove for the tie-breaking rule (\emph{i}), the proof for the other rule is straightforward.
Under the first rule, we redefine the function $s(\x)$ as follows 
\begin{align} \label{eq:u_uniformly_random}
{s}(\x^{\N})  =
    \begin{cases}
   1, & \text{if }  u^\1(\x^\N) >  u^\2(\x^\N),    \\
   2, & \text{if }  u^\1(\x^\N) <   u^\2(\x^\N), \\
   \frac{3}{2}, & \text{otherwise.}
    \end{cases}
\end{align}
Conditions 1, 2, and 4 in \Cref{defGSAP} can be verified by following the same steps as in \Cref{lem_GSAP}.
As for the third condition, at $\x^\N$ where 
 $u^\1(\x^\N) = u^\2(\x^\N)$, $\bm \nu (\x^\N) = \bm \rho \big(2 - s(\x^{\N})\big)-\x^{\N}  $ reduces to $\frac{1}{2}\bm \rho - \x^\N$. 
 At $\x^\N$ where $u^\1(\x^\N) > u^\2(\x^\N)$, $\bm \nu (\x^\N)$ is  equal to  $\bm \rho - \x^\N$, and finally at
  $\x^\N$ where $u^\1(\x^\N) < u^\2(\x^\N)$, $\bm \nu (\x^\N)$ is  equal to  $- \x^\N$.
  In all these three cases, $\bm \nu (\x^\N)$  belongs to $\mathrm{Conv} (\bm \rho - \x, -\x)$.
  Thus,
 the third condition is  satisfied by taking $\y = \x$ in \eqref{eq_3cnd_GSAP}.
 This completes the proof.
 }
\end{proof}
\end{document}